\newtheorem{theorem}{Theorem}
\newtheorem{proposition}{Proposition}
\newtheorem{lemma}[theorem]{Lemma}
\newcommand{\Rmnum}[1]{\expandafter\@slowromancap\romannumeral #1@}
\providecommand{\keywords}[1]{\textbf{\textit{Keywords:}}#1}
\newcommand\blfootnote[1]{%
  \begingroup
  \renewcommand\thefootnote{}\footnote{#1}%
  \addtocounter{footnote}{-1}%
  \endgroup
}
\title{Mixing Time on the Kagome Lattice\footnote{This work was supported by the ANR project QuasiCool (ANR-12-JS02-011-01)}}
\author{Alexandra Ugolnikova\footnote{contact: alexandra.ugolnikova \textit{at} lipn.univ-paris13.fr}}
\affil{Laboratoire d'Informatique de Paris Nord}
\date{\today}
\begin{document}

\maketitle

\begin{abstract}
We consider tilings of a closed region  of the Kagome lattice (partition of the plane into regular hexagons and equilateral triangles such that each edge is shared by one triangle and one hexagon). We are interested in the rate of convergence to the  stationarity of a natural Markov chain defined on the set of Kagome tilings. The rate of convergence can be represented by the mixing time which mesures the amount of time it takes the chain to be close to its stationary distribution. We obtain a $O(N^4)$ upper bound on the mixing time of a weighted version of the natural Markov chain. We also consider Kagome tilings restrained to two prototiles, prove flip-connectivity and draw a $O(N^4)$ upper bound as well on the mixing time of the natural Markov chain in a general (non weighted) case. Finally, we present simulations that suggest existence of a long range phenomenon. 
\end{abstract}


\blfootnote{\textup{2010} \textit{Mathematics Subject Classification}: 60J10, 52C20.}
\keywords{
Kagome tilings, Height function, Tiling graph,
 Markov chain, Mixing time, Coupling.
}

\section{Introduction}

In the present work we consider tilings on the \textbf{Kagome} lattice (also known as the \textbf{Butterfly}/\textbf{tri-hexagonal} lattice) which is a partition of the plane into hexagons and triangles of the same side length such that each edge is shared by one triangle and one hexagon (see Figure \ref{fig: lattice}). The dimer model on the tri-hexagonal lattice was studied along with other dimer models on regular lattices \cite{WW07} (dimers on square and triangular lattice were studied, \textbf{e.g.}, in \cite{EKLP92, E84, LRS95}).

A prototile on the Kagome lattice is a hexagon with two adjacent triangles. There are three type of prototiles (up to rotation) that we call \textbf{trapeze, fish} and \textbf{lozenge} and that are shown in Figure \ref{fig: all tiles}. Note that the defined prototiles are different from dimers and correspond more to trimers. A closed region $R$ of the Kagome lattice is tileable if there is at least one partition of the cells of this region into prototiles. An example of a Kagome tiling is shown in Figure  \ref{fig: tiling}.

\begin{figure}[hbtp] 
\centering
\includegraphics[width=0.5\textwidth]{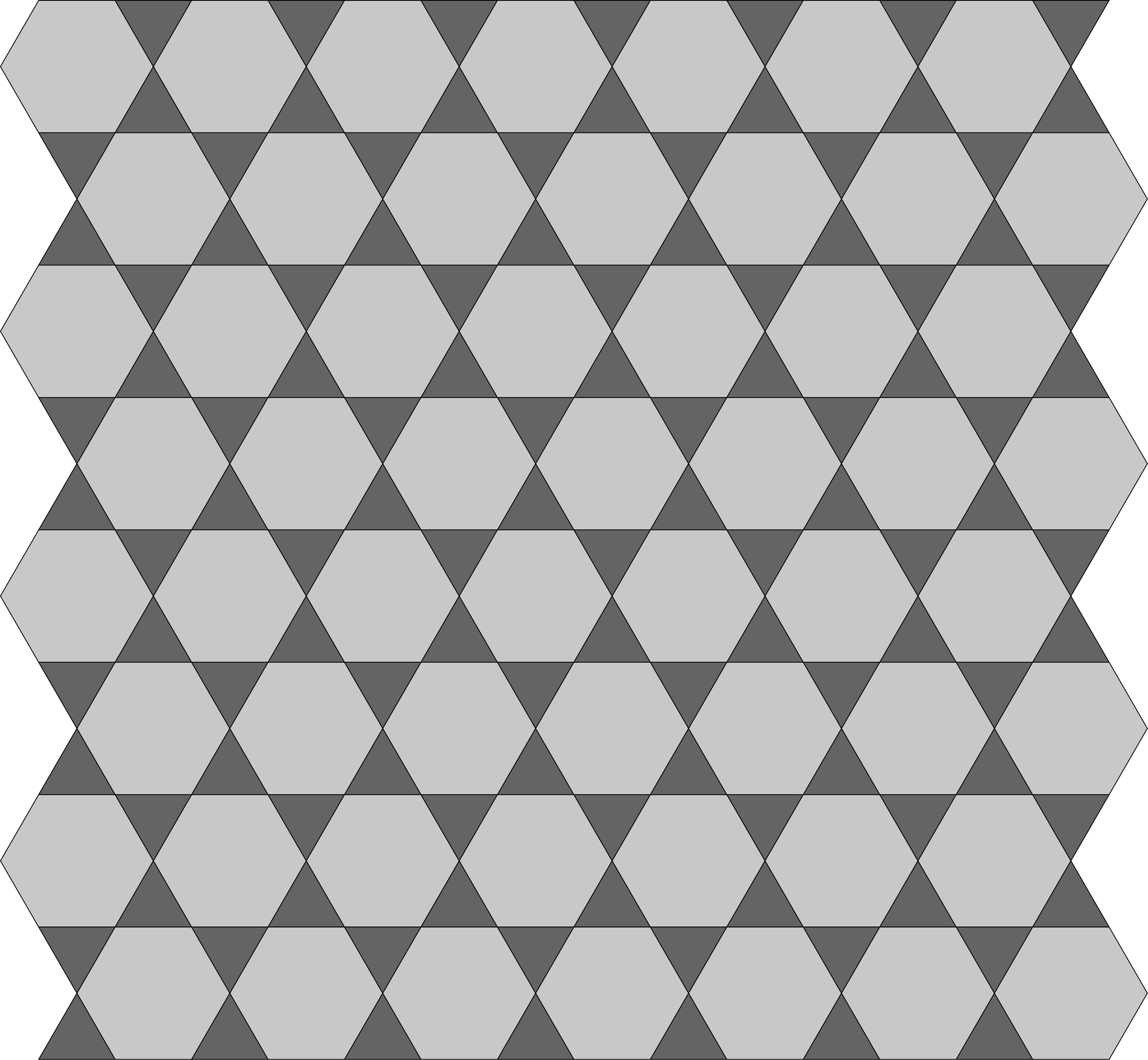}
\caption{Kagome lattice}
\label{fig: lattice}
\end{figure}

\begin{figure}[hbtp] 
\centering
\includegraphics[width=0.5\textwidth]{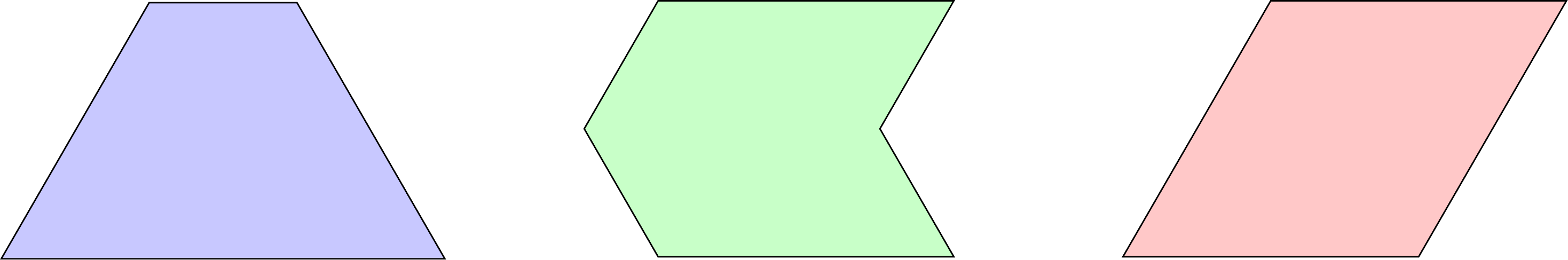}
\caption{Three Kagome prototiles: trapeze, fish and lozenge.}
\label{fig: all tiles}
\end{figure} 

\begin{figure}[hbtp] 
\centering
\includegraphics[width=0.5\textwidth]{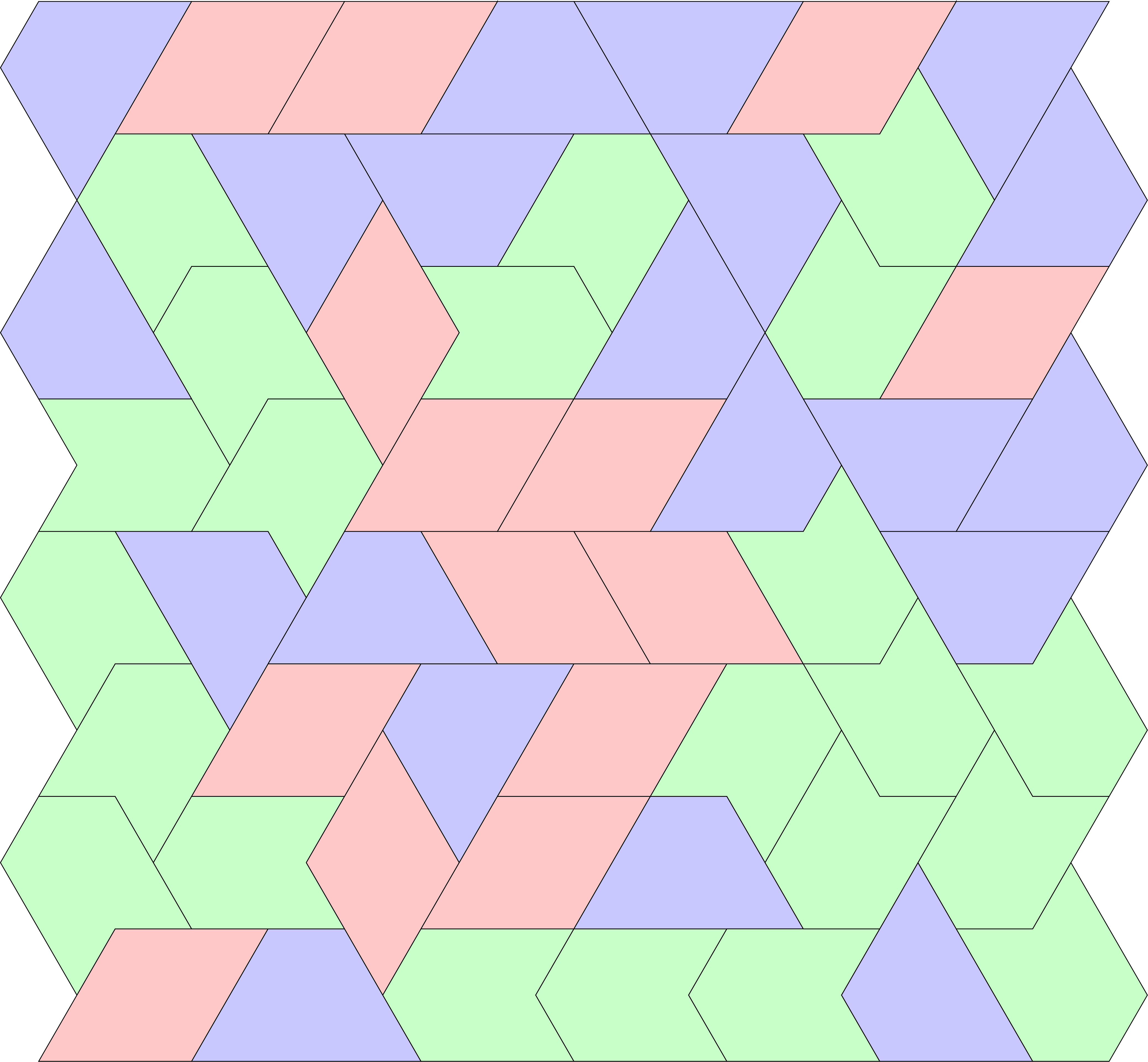}
\caption{General Kagome tiling}
\label{fig: tiling}
\end{figure} 

We are trying to answer questions related to random generation. In Section \ref{sec: settings} we lay down the main notions about Markov chains and mixing time. We introduce the height function and flips as it was done in \cite{B06}. The connectivity of the configurational space under flips (that was proven in \cite{B06} as well) allows us to introduce a Markov chain and study its properties in Section \ref{sec: markov chain}. Even though simulations using Coupling from the Past (see \cite{LPW09}, Chapter 22) suggest the mixing time to be $O(N^{2.5})$, where $N$ is the number of tiles (see Table \ref{tab: simulation})), a rigorous proof seems to be rather challenging and we explain why it is so at the end of Section \ref{sec: markov chain}.

In Section \ref{sec: weighted chain} we consider a weighted version of the chain  where weights $\lambda$ are assigned to one type of prototiles. Adding weights helps to notice phase transitions in the system. Usually weights $\lambda$ are assigned in such a way that the case $\lambda = 1$ corresponds to the unweighted version of the chain. This case is generally hard to analyze but it becomes easier to analyze the dynamics for $\lambda$ below and above some critical point. We draw polynomial bounds on the mixing time of the Markov chain  with a condition on weights $\lambda$ in Theorem \ref{thm: weighted kagome mixing}. 

In Section \ref{sec: restrained} we consider tilings by only two specific types of tiles instead of three and consider a version of the initial chain. Fast mixing can be proven when the state space (the set of all possible configurations) is connected. We show that in case of lozenge-type regions, the state space is connected. We prove that the Markov chain introduced for restrained Kagome tilings is fast mixing for lozenge-shaped regions (Theorem \ref{th : fast mixing restr}). 

Finally, in Section \ref{sec: limit shape} we present some simulations for a lozenge-type region that suggest a specific limit shape similar to the Arctic circle for dimer tilings.

\begin{table}
\begin{center}
\begin{tabular}{|r|c|}
\hline 
$n$ & $t$ \\
\hline
10 & $ 2.91 \times 5^5$ \\
12 & $ 2.53 \times 6^5$ \\
14 & $ 1.92 \times 7^5$ \\
16 & $ 1.91 \times 8^5$ \\
18 & $ 1.76 \times 9^5$ \\
20 & $ 1.43 \times 10^5$ \\
22 & $ 1.44 \times 11^5$ \\
24 & $ 1.35 \times 12^5$ \\  
26 & $ 1.31 \times 13^5$ \\
28 & $ 1.15 \times 14^5$ \\
30 & $ 1.07 \times 15^5$ \\
32 & $ 1.13 \times 16^5$ \\
34 & $ 1.05 \times 17^5$ \\
36 & $ 1.05 \times 18^5$ \\
38 & $ 1.04 \times 19^5$ \\
40 & $ 1.02 \times 20^5$ \\
\hline 
\end{tabular}
\end{center}
\caption{Coupling time for Kagome tilings of the square region of size $2n$, $n = 5, \ldots 20$ with $N = n^2$ tiles.}
\label{tab: simulation}
\end{table}

\section{Settings}
\label{sec: settings}

\subsection{Mixing and coupling times}
\label{subsec: mixing and coupling}

We consider reversible ergodic Markov chains with a finite state space $\Omega$. We denote its stationary distribution by $\pi$, its probability law by $P$. For any initial state $x \in \Omega$ let the \textbf{total variation distance} between $P(x, \cdot)$ and $\pi$ is 

$$
d_{TV}(P(x, \cdot), \pi) := \frac{1}{2} \sum_{y \in \Omega} \lvert P^t(x,y) - \pi(y) \rvert.
$$

Let us write it as $d_x(t)$. The \textbf{mixing time} of the $MC$ is the time it takes the chain to get close to its stationary distribution. Formally, it is defined as follows:

$$
\tau_{mix}(\varepsilon) := \max_{x \in \Omega} \min \lbrace t: d_x(t') \leq \varepsilon \;\forall\; t' \geq t\rbrace , 
$$

$$
\tau_{mix}:= \tau_{mix} \left( \frac{1}{4} \right).
$$

A classical way to bound the rate of convergence of a chain is to bound its mixing time. There are lot of different ways of bounding the mixing time: via the second largest eigenvalue (which can be analyzed using the corresponding tiling graph's properties), coupling methods (see, \textbf{e.g.}, \cite{DG98, DS91, LPW09, R06,S92}). 

Here we concentrate on the \textbf{coupling} method.
A coupling for two probability distributions $\mu$ and $\nu$ is a pair of random variables $(X,Y)$ defined on the  same probability space such that $\mathbb{P}(X=x) = \mu(x)$ and $\mathbb{P}(Y=y) = \nu(y)$. Here we  will be  using couplings for Markov chains  where constructing copies of the chain proves to be a useful tool to analyze the distance to stationarity. A \textbf{coupling of a MC} is a stochastic process $(X_t, Y_t)_t$ on $\Omega \times \Omega$ such that:
\begin{enumerate}
\item $X_t$ and $Y_t$ are copies of the MC with initial states $X_0 = x$ and $Y_0 = y$;
\item If $X_t = Y_t$, then $X_{t+1}  = Y_{t+1}$.
\end{enumerate}
Let $T^{x,y} = \min \lbrace t: X_t = Y_t \vert X_0 = x, Y_0 = y\rbrace$. Then define the \textbf{coupling time} of the $MC$ to be
$$
\tau_{cp} :=  \max_{x,y} \mathbb{E} T^{x,y}.
$$

The following result \cite{A81} relates the coupling and mixing times:

\begin{theorem}[Aldous]\label{theorem: A}
$$
\tau_{mix}(\varepsilon) \leq \lceil \tau_{cp} e\ln \varepsilon^{-1}  \rceil.
$$
\end{theorem}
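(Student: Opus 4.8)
The plan is to derive the bound from the fundamental coupling inequality together with a submultiplicativity (boosting) argument, which is what upgrades a crude polynomial-in-$\varepsilon^{-1}$ estimate into the logarithmic dependence appearing in the statement. Throughout I would write $d(t) = \max_{x} d_{TV}(P^t(x,\cdot),\pi)$ and introduce the auxiliary quantity $\bar d(t) = \max_{x,y} d_{TV}(P^t(x,\cdot),P^t(y,\cdot))$.

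First I would record the coupling inequality: for the coupling $(X_t,Y_t)$ of the chain described above, started at $X_0=x$, $Y_0=y$, the coupling characterization of total variation distance gives $d_{TV}(P^t(x,\cdot),P^t(y,\cdot)) \le \mathbb{P}(X_t\neq Y_t)$. Because the coupling satisfies property (2) in its definition (once $X_t=Y_t$ the copies stay merged), the event $\{X_t\neq Y_t\}$ coincides with $\{T^{x,y}>t\}$, so $\bar d(t)\le\max_{x,y}\mathbb{P}(T^{x,y}>t)$. Next I would compare $d$ with $\bar d$: since $\pi$ is invariant, $\pi = \sum_y \pi(y)P^t(y,\cdot)$, and the convexity of total variation distance yields $d_{TV}(P^t(x,\cdot),\pi)\le\sum_y\pi(y)\,d_{TV}(P^t(x,\cdot),P^t(y,\cdot))\le\bar d(t)$, hence $d(t)\le\bar d(t)$.

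The decisive step is the submultiplicativity $\bar d(s+t)\le\bar d(s)\,\bar d(t)$, and I expect this to be the main obstacle. I would prove it by taking the optimal coupling realizing $\bar d(s)$ after $s$ steps: with probability at least $1-\bar d(s)$ the two copies have already met and contribute nothing to the distance, while on the complementary event they occupy some pair of states from which the remaining $t$ steps add at most $\bar d(t)$; making this precise via the Markov property and the coupling characterization gives the inequality, and iterating yields $\bar d(kt)\le\bar d(t)^k$ for every integer $k\ge1$.

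Finally I would optimize. Markov's inequality applied to the coupling time gives $\mathbb{P}(T^{x,y}>t)\le \mathbb{E}[T^{x,y}]/t\le\tau_{cp}/t$, so choosing a block length of order $e\,\tau_{cp}$ forces $\bar d\le 1/e$ over a single block; iterating $\lceil\ln\varepsilon^{-1}\rceil$ blocks drives $d(t)\le\bar d(t)\le e^{-\lceil\ln\varepsilon^{-1}\rceil}\le\varepsilon$. Multiplying the block length by the number of blocks and absorbing the rounding of $e\,\tau_{cp}$ and $\ln\varepsilon^{-1}$ into a single ceiling produces $\tau_{mix}(\varepsilon)\le\lceil\tau_{cp}\,e\ln\varepsilon^{-1}\rceil$. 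The only delicate point here is the bookkeeping of the ceiling functions needed to land on exactly the stated constant; the genuine structural content lies entirely in the coupling inequality and the submultiplicativity lemma.
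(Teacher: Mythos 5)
There is an important structural point first: the paper does not prove this theorem at all --- it is quoted as a known result with the citation \cite{A81}, so there is no internal proof to compare against. Your proposal supplies the standard textbook argument, and its skeleton is sound: the coupling inequality $d_{TV}(P^t(x,\cdot),P^t(y,\cdot))\le\mathbb{P}(X_t\neq Y_t)=\mathbb{P}(T^{x,y}>t)$, the comparison $d(t)\le\bar d(t)$ via invariance of $\pi$ and convexity, the submultiplicativity $\bar d(s+t)\le\bar d(s)\,\bar d(t)$, and Markov's inequality for the coupling time. One genuinely good choice you made: you derive submultiplicativity from \emph{optimal} couplings (a property of the chain itself) rather than from the given coupling. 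That matters, because the coupling in the paper's definition is not assumed Markovian, so its own tail $\mathbb{P}(T^{x,y}>t)$ need not be submultiplicative; working with $\bar d$ sidesteps this.

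The gap is exactly at the point you waved off as ``bookkeeping'': that step fails, and not just cosmetically. With integer block lengths the argument forces $t_0=\lceil e\tau_{cp}\rceil$ and $k=\lceil\ln\varepsilon^{-1}\rceil$ blocks, and what you actually prove is $\tau_{mix}(\varepsilon)\le\lceil e\tau_{cp}\rceil\cdot\lceil\ln\varepsilon^{-1}\rceil$, a product of two ceilings. This is in general strictly larger than the stated single ceiling $\lceil e\tau_{cp}\ln\varepsilon^{-1}\rceil$: for $\tau_{cp}=10$ and $\varepsilon=1/4$ the two-ceiling bound is $28\cdot 2=56$, while the theorem claims $\lceil 37.68\ldots\rceil=38$. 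Moreover, no cleverer choice of integer blocks rescues the stated constant in this example: any decomposition of $t=38$ into blocks of lengths $t_1+\cdots+t_k\le 38$ gives, via Markov plus submultiplicativity, at best $\min_{k\ge1}(10k/38)^k$, whose minimum (attained at $k=1$) is $10/38\approx0.263>1/4$, so this method simply cannot certify $d(38)\le 1/4$. The single-ceiling constant is what comes out of the \emph{continuous} idealization (block length exactly $e\tau_{cp}$, a real number of blocks), and turning that into the literal discrete statement requires an argument you have not given. So your proof establishes the theorem up to this rounding --- a bound of the same order, which is all the paper ever uses of Theorem \ref{theorem: A}, since it only invokes it with fixed $\varepsilon$ --- but the sentence ``absorbing the rounding \ldots into a single ceiling'' is not a proof of the constant as stated; you should either weaken the conclusion to $\lceil e\tau_{cp}\rceil\lceil\ln\varepsilon^{-1}\rceil$ or supply the sharper tail estimate the stated form presupposes.
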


One of the most used methods to bound the mixing time is the following path coupling theorem \cite{DG98}. The authors show that in order to bound the coupling time, one only has to consider pairs of configurations of the coupled chain that are close to each other in the defined metric. It is sufficient to prove that they [each pair of configurations] have more tendency to remain close to each other under the evaluation of the chain. Then the mixing time is polynomial and depends on the diameter of the corresponding graph. 

\begin{theorem}[Dyer-Greenhill]\label{theorem: DG}
Let $\varphi : \Omega \times\Omega \rightarrow \lbrace 0, \ldots, D \rbrace$  be an integer-valued metric,  $U$ -- a subset of $\Omega \times \Omega$ such that for all $(X_t, Y_t) \in  \Omega \times \Omega $ there exists a path between them: $X_t  = Z_0, Z_1, \ldots, Z_n = Y_t $  with $(Z_i, Z_{i+1}) \in U$ for $ 0\leq i\leq r-1 $ and $$ \sum_{i=0}^{r-1} \varphi(Z_i, Z_{i+1}) = \varphi(X_t, Y_t).$$
Let $MC$ be a Markov chain on $\Omega$ with transition matrix $P$. Consider a random function $f: \Omega \rightarrow \Omega$ such  that $\mathbb{P}(f(X)= Y) = P(X,Y) $ for all $X,Y \in \Omega$, and let a coupling be  defined by $(X_t,Y_t) \rightarrow (X_{t+1}, Y_{t+1}) = (f(X_t), f(Y_t))$.

\begin{enumerate}
\item If there exists $\beta < 1$  such that $ \mathbb{E}[\varphi(X_{t+1}, Y_{t+1})] \leq \beta \varphi(X_t,Y_t)$ for all $(X_t, Y_t) \in U$, then the mixing time satisfies $$ \tau_{mix}(\varepsilon) \leq \frac{\ln(D\varepsilon^{-1})}{1-\beta}.$$ 
\item If $\beta = 1$, and  $ \exists \; \alpha > 0$ that satisfies $\mathbb{P}(\varphi(X_{t+1},Y_{t+1}) \neq  \varphi(X_t,  Y_t)) \geq \alpha $ for all $t$ such that $X_t \neq Y_t$. Then the mixing time  satisfies  
$$ \tau_{mix}(\varepsilon) \leq  \left\lceil\frac{eD^2}{\alpha} \right\rceil \lceil \ln \varepsilon^{-1} \rceil .$$ 
\end{enumerate}
\end{theorem}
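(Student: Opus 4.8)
The plan is to derive both parts from a single \emph{path coupling extension} that promotes the one-step hypotheses, stated only on the ``close'' set $U$, to a global one-step estimate valid for every pair in $\Omega \times \Omega$; the two mixing bounds then follow from standard coupling inequalities. The first and central step is this extension. Fix an arbitrary pair $(X_t, Y_t)$ and choose a geodesic $X_t = Z_0, Z_1, \ldots, Z_r = Y_t$ with $(Z_i, Z_{i+1}) \in U$ and $\sum_{i=0}^{r-1}\varphi(Z_i, Z_{i+1}) = \varphi(X_t, Y_t)$, which exists by the hypothesis on $U$. Applying the \emph{single} random map $f$ to every vertex of the geodesic at once and using the triangle inequality for $\varphi$ gives the pointwise bound $\varphi(f(Z_0), f(Z_r)) \le \sum_{i=0}^{r-1}\varphi(f(Z_i), f(Z_{i+1}))$. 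Taking expectations, invoking linearity, applying the hypothesis $\mathbb{E}[\varphi(f(Z_i), f(Z_{i+1}))] \le \beta\,\varphi(Z_i, Z_{i+1})$ on each $U$-edge, and collapsing the sum by the geodesic identity yields $\mathbb{E}[\varphi(X_{t+1}, Y_{t+1}) \mid \mathcal{F}_t] \le \beta\,\varphi(X_t, Y_t)$ for \emph{all} pairs. I expect this to be the step carrying the most conceptual weight; everything afterward is calculus or a drift computation.

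For part 1, with $\beta < 1$, I would iterate the global contraction through the tower property to get $\mathbb{E}[\varphi(X_t, Y_t)] \le \beta^t\,\varphi(X_0, Y_0) \le \beta^t D$. Since $\varphi$ is integer-valued and vanishes precisely on the diagonal, Markov's inequality gives $\mathbb{P}(X_t \ne Y_t) = \mathbb{P}(\varphi(X_t, Y_t) \ge 1) \le \beta^t D$, and the coupling inequality gives $d_x(t) \le \mathbb{P}(X_t \ne Y_t) \le \beta^t D \le D\,e^{-(1-\beta)t}$, where the last step uses $\beta \le e^{-(1-\beta)}$. Setting the right-hand side equal to $\varepsilon$ and solving for $t$ then produces $\tau_{mix}(\varepsilon) \le \ln(D\varepsilon^{-1})/(1-\beta)$.

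For part 2, with $\beta = 1$ the extension only makes $W_t := \varphi(X_t, Y_t)$ a supermartingale, so contraction no longer forces coalescence and I would instead exploit the variance guaranteed by $\alpha$. Writing $T = \min\{t : W_t = 0\}$, the hypothesis forces $|W_{t+1} - W_t| \ge 1$ with probability at least $\alpha$ while $W_t \ge 1$, hence $\mathbb{E}[(W_{t+1} - W_t)^2 \mid \mathcal{F}_t] \ge \alpha$ there. The key device is the concave potential $g(w) = w(2D - w)$, which is nonnegative, increasing on $\{0, \ldots, D\}$ and bounded by $D^2$; expanding $g(W_{t+1}) - g(W_t) = 2(D - W_t)(W_{t+1} - W_t) - (W_{t+1} - W_t)^2$ and taking conditional expectation gives, for $W_t \ge 1$, a bound $\le 2(D - W_t)\,\mathbb{E}[W_{t+1} - W_t \mid \mathcal{F}_t] - \alpha \le -\alpha$, since $D - W_t \ge 0$ and the drift is nonpositive. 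Thus $A_t := g(W_{t \wedge T}) + \alpha(t \wedge T)$ is a nonnegative supermartingale, so $\alpha\,\mathbb{E}[t \wedge T] \le \mathbb{E}[A_t] \le A_0 \le D^2$ for every $t$, and monotone convergence yields $\mathbb{E}[T] \le D^2/\alpha$, i.e. $\tau_{cp} \le D^2/\alpha$. Plugging this into Aldous' bound (Theorem \ref{theorem: A}) and repackaging the ceilings gives $\tau_{mix}(\varepsilon) \le \lceil eD^2/\alpha \rceil \lceil \ln \varepsilon^{-1}\rceil$. The main obstacle in this part is engineering a potential that is simultaneously bounded and has a uniform downward drift of order $\alpha$; once $g$ is found, the supermartingale bookkeeping and the passage to $\mathbb{E}[T]$ are routine.
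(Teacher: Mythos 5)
This theorem is stated in the paper as a quoted result of Dyer and Greenhill \cite{DG98} and is given no proof there, so your proposal can only be judged on its own merits — and it holds up. Your reconstruction is the standard path-coupling argument: the crucial extension step (applying the \emph{single} random map $f$ to every vertex of a geodesic, then using the triangle inequality and the $U$-edge hypothesis) is exactly what promotes the local contraction to a global one, and it is legitimate here precisely because the coupling is defined by a common random function, so each adjacent pair $(Z_i, Z_{i+1})$ evolves as a copy of the hypothesized coupling. Part 1 (iterate, Markov's inequality, $\beta \le e^{-(1-\beta)}$) is the original argument; in part 2 your potential $g(w)=w(2D-w)$ is, via $D^2 - g(w) = (D-w)^2$, equivalent to the classical submartingale $(D-W_t)^2 + \alpha t$ used in \cite{LRS95, DG98}, and your final bookkeeping $\lceil (D^2/\alpha)\, e \ln\varepsilon^{-1}\rceil \le \lceil eD^2/\alpha\rceil\lceil \ln\varepsilon^{-1}\rceil$ correctly recovers the stated bound from Aldous's Theorem \ref{theorem: A}. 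The only reading you implicitly fixed — that the condition $\mathbb{P}(\varphi(X_{t+1},Y_{t+1}) \neq \varphi(X_t,Y_t)) \geq \alpha$ is a conditional bound holding uniformly over all pairs with $X_t \neq Y_t$, not merely pairs in $U$ — is indeed the intended one, and your proof uses it exactly where needed.
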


\subsection{Height function and flips}
\label{subsec: height function and flips}
Consider the Kagome lattice (Figure \ref{fig: lattice}). Following Thurston \cite{T90}, the orientation on the edges of the lattice is set to be clockwise on triangles and anti-clockwise on hexagons. We assign $+1$ to an edge if it belongs to a tile and $-2$ otherwise (\textbf{flow} in \cite{B06}). If we follow the edges  around a tile and sum the flows: with $+$ if we follow the orientation of arrows, with $-$ if not, then the flow gives $0$ around each tile. Let us now introduce the notion of \textbf{height}. We denote the set of lattice vertices in $R$ by $V_R$ and $ N := \vert V_R \vert$. Fix a vertex $v \in V_R $ and let its height $h(v):=0$. For any $w \in V_R$ its height $h(w)$ is defined by the flow from $v$ to $w$. See Figure \ref{fig: flow}. A vertex $v \in V_R$ is called a \textbf{local minimum (maximum)} if $h(v)$ is less(greater) than $h(w)$ for all $w$ that share an edge of the lattice with $v$.
For a tiling $T_R$ (or simply $T$) let its height be
$$
h(T) := \sum_{v \in V_R} h(v).
$$
Take two tilings $T$ and $Q$ of $R$. We define a natural order on the set of tilings as follows: We say that $T \leq Q$ if $h(T) \leq h(Q)$.

\begin{figure}[hbtp] 
\centering
\includegraphics[width=0.5\textwidth]{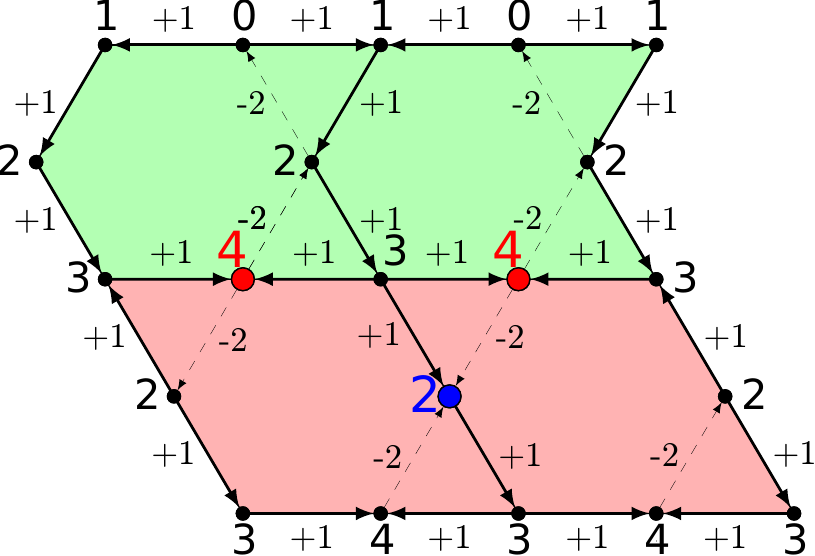}
\caption{Flow, orientation and heights of a Kagome tiling.}
\label{fig: flow}
\end{figure}

Let us have two tiles $a$ and $b$. Let $a$ consist of a hexagon $h_a$ and two triangles $t_a$ and $t_1$, $b$ of $h_b, t_b$ and $t_2$, where $t_a$ is adjacent to $h_b$, $t_b$ is adjacent to $h_a$. Then the transformation of $a$ and $b$ into $a'$, $b'$ such that $t_a \in b'$, $t_b \in a'$ is called a \textbf{(simple) flip} (as defined in \cite{B06}).
Flips can be performed only around local minima and maxima. A flip turns a local minimum into a local maximum and the other way around. We say that it has two directions: \textbf{minimal} if it decreases the height in the vertex and and \textbf{maximal} if it increases. All possible flips (up to rotation) are shown in Figure \ref{fig: all_flips}. An example of flips is shown in Figure \ref{fig: flip_path}.

\begin{figure}[hbtp] 
\centering
\includegraphics[width=0.8\textwidth]{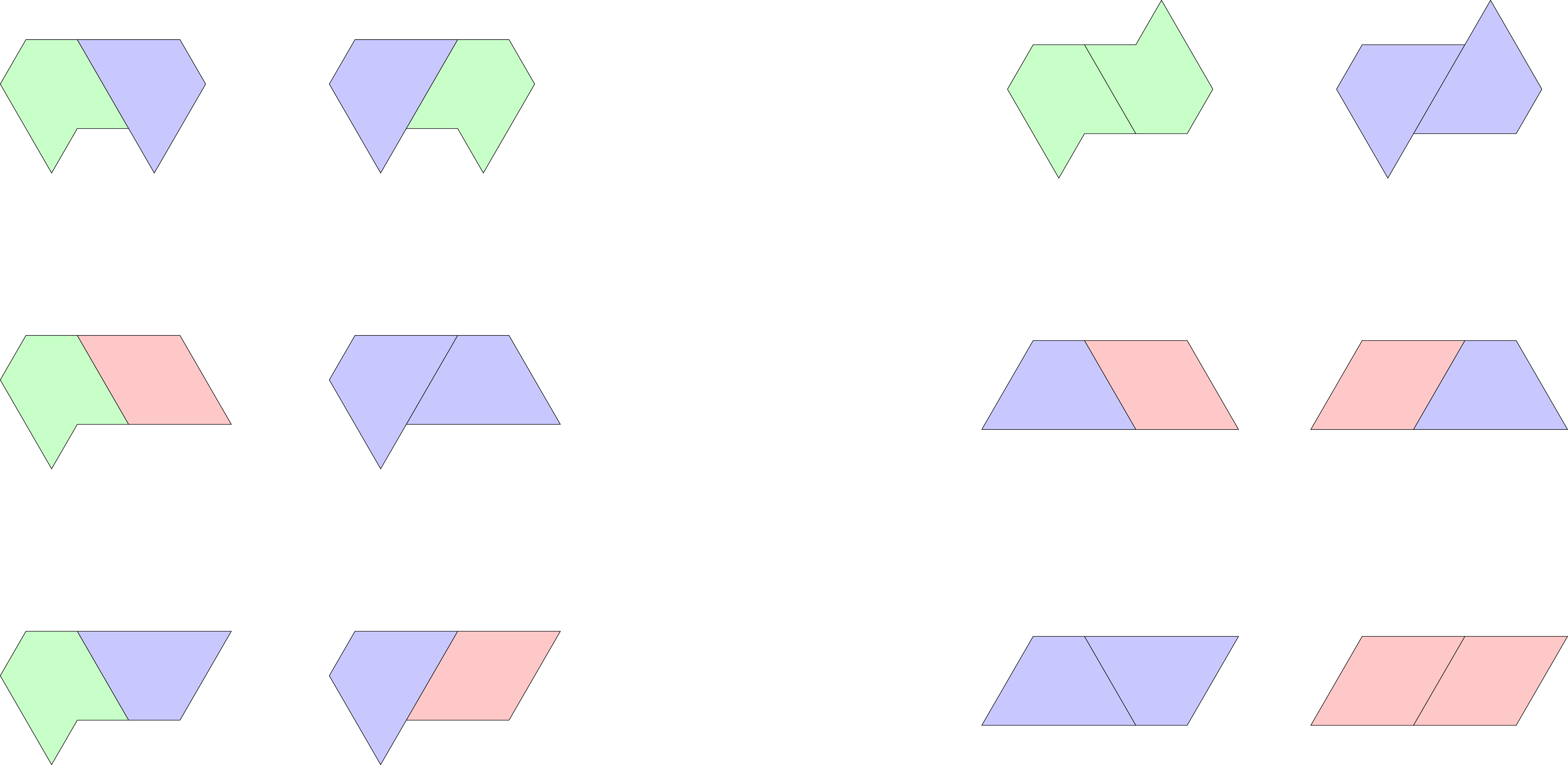}
\caption{All possible flips (up to rotation) for the general Kagome tiling.}
\label{fig: all_flips}
\end{figure} 

\begin{figure}[hbtp] 
\centering
\includegraphics[width=1\textwidth]{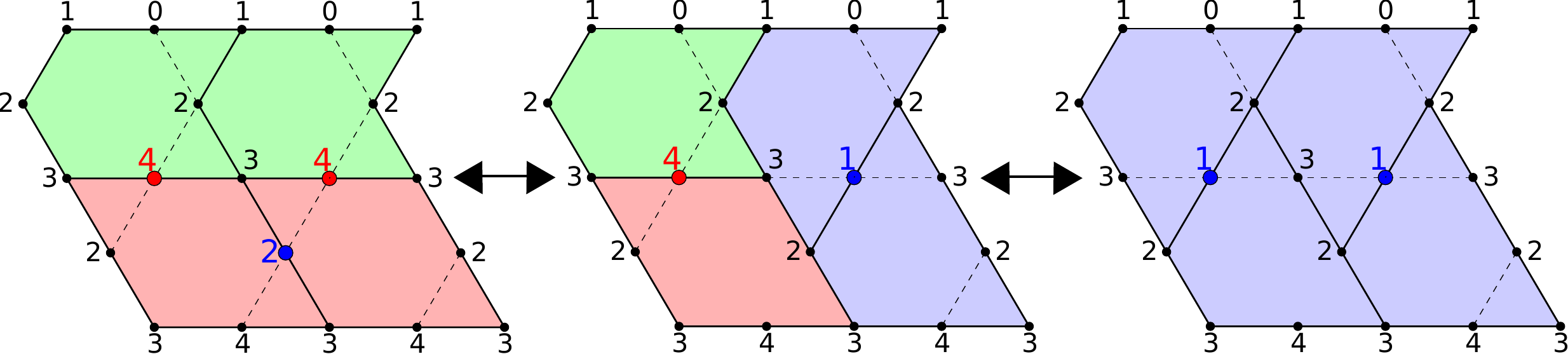}
\caption{Example of flips performed in a Kagome tiling.}
\label{fig: flip_path}
\end{figure}

\section{General Kagome Markov chain}
\label{sec: markov chain}

Let $\Omega_R$( or simply $\Omega$) be the set of Kagome tilings of the region $R$ with $N$ inner vertices and let $h$ be a height function defined on the vertices of $R$.

\begin{proposition}\label{prop: weight for vertices}
Let  $n_h(v)$ be the number of different heights $h$ in a vertex $v \in V_R$. Then for all $v \in V_{R}$ 
$$
n_h(v) \leq \sqrt{N},
$$
where $N = \vert V_R \vert.$ 
\end{proposition}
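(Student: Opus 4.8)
The plan is to reduce the statement to two facts: that the heights attainable at a fixed vertex $v$ form an arithmetic progression of common difference $3$, and that the range of these heights is $O(\sqrt{N})$.

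For the first fact, fix $v$ and take any two tilings $T, T' \in \Omega_R$. Writing $f_T(e) \in \{+1, -2\}$ for the flow assigned to an edge $e$ by $T$, the height at $v$ is obtained by summing $\pm f_T(e)$ along any lattice path from the reference vertex to $v$ (well-definedness of $h$ is guaranteed by \cite{B06}). Comparing $T$ and $T'$ edge by edge along one such path, each term contributes $\pm(f_T(e) - f_{T'}(e))$, and since $f_T(e), f_{T'}(e) \in \{+1, -2\}$ their difference lies in $\{-3, 0, +3\}$. Hence $h_T(v) - h_{T'}(v) \equiv 0 \pmod 3$ for every pair of tilings: all attainable heights at $v$ are congruent modulo $3$. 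Consequently, if $h_{\min}(v)$ and $h_{\max}(v)$ denote the smallest and largest heights attained at $v$ over $\Omega_R$, then
$$ n_h(v) \le \frac{h_{\max}(v) - h_{\min}(v)}{3} + 1. $$

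For the second fact I would bound the range by the distance from $v$ to the boundary. The height function is fixed on $\partial R$: it is an invariant of the region, independent of the tiling (\cite{T90, B06}). Let $d(v)$ be the graph distance from $v$ to the nearest boundary vertex $b$, and fix a shortest path of length $d(v)$ from $v$ to $b$. For any two tilings the difference $h_T(v) - h_{T'}(v)$ equals the sum along this path of the terms $\pm(f_T(e) - f_{T'}(e))$, since the two heights agree at $b$. Each term lies in $\{-3, 0, +3\}$, so $\lvert h_T(v) - h_{T'}(v) \rvert \le 3\, d(v)$ and therefore $h_{\max}(v) - h_{\min}(v) \le 3\, d(v)$. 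Combined with the previous display this gives $n_h(v) \le d(v) + 1$.

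It remains to show $d(v) + 1 \le \sqrt{N}$, which is where the two-dimensional geometry of the Kagome lattice enters. Every vertex at distance strictly less than $d(v)$ from $v$ is interior, hence belongs to $V_R$, and more generally the ball of radius $d(v)$ around $v$ lies in $R$; a direct count shows such a ball contains at least $(d(v)+1)^2$ lattice vertices of $V_R$. Thus $N = \lvert V_R \rvert \ge (d(v)+1)^2$, i.e. $d(v) + 1 \le \sqrt{N}$, which yields $n_h(v) \le \sqrt{N}$. I expect the main obstacle to be exactly this last, geometric step: the modular and range arguments are essentially forced, but extracting the clean constant $\sqrt{N}$ (rather than $C\sqrt{N}$) requires a careful isoperimetric count on the Kagome lattice, verifying both the quadratic vertex growth of balls and the fact that the relevant ball stays inside $R$.
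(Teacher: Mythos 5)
Your proof is correct, and in outline it follows the same strategy as the paper's: bound $n_h(v)$ by one plus the distance from $v$ to the boundary, then bound that distance by $\sqrt{N}$. But the way you substantiate the two steps is genuinely different, and in both cases more solid. The paper peels $R$ into layers $V_0 = \partial R, V_1, \ldots, V_l$ by distance to the boundary, asserts without proof that $\lvert n_h(v) - n_h(w)\rvert \le 1$ for adjacent vertices in consecutive layers, and bounds the number of layers by $l \le N/L(R) \le \sqrt{N}$, where $L(R)$ is the perimeter. Your mod-3 congruence together with the Lipschitz bound $\lvert h_T(v) - h_{T'}(v)\rvert \le 3d(v)$ is exactly what is needed to justify the paper's layer-increment claim: without the observation that all attainable heights at $v$ lie in a single residue class modulo $3$, adjacency alone only gives $n_h(w) \le 2\,n_h(v)$, not $n_h(v)+1$. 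Likewise, your ball-volume count $N \ge (d(v)+1)^2$ replaces the paper's perimeter argument, which is shaky as an exact inequality: inner layers have fewer than $L(R)$ vertices, so $l \le N/L(R)$ does not follow, and $L(R) \ge \sqrt{N}$ is itself an unproved isoperimetric assertion. The one step you leave as an assertion --- that a Kagome ball of radius $r$ contains at least $(r+1)^2$ vertices --- is true and routine: the sphere of radius $k \ge 1$ in the Kagome lattice has $4k$ vertices, so the ball has $1 + 2r(r+1) \ge (r+1)^2$ vertices, and your argument that the ball of radius $d(v)$ stays inside $V_R$ is sound. In short, your route buys rigor (it proves the two facts the paper only gestures at) at the cost of one lattice-specific counting lemma; the paper's version is shorter but leaves both of its key inequalities unjustified.
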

{\noindent \emph{Proof.}
In order to prove this, let us introduce the following subsets of vertices of our system: we denote $V_{\partial R}$ by $V_0$  which contains all vertices that belong to the border $\partial R$.  The border does not change, so $n_h(v) = 1$ for $v \in V_0$. Now take vertices in $V \setminus V_0$ that have vertices from $V_0$ as their neighbours, and denote this subset by $V_1$. Now  $n_h(v) \leq 2 $ for $v \in V_1$. Take the subset of vertices from $V \setminus (V_0 \cup V_1)$ that have neighbouring vertices from $V_1$ and denote it by $V_2$, etc. $R$ has a finite number of vertices,  so in the end we will obtain the subset $V_l$. Denote the perimeter of $R$ by $L(R)$.  Since $R$ has area $N$, then $l \leq \frac{N}{L(R)}$ and $\frac{N}{L(R)} \leq \sqrt{N}$. For all $(v,w) \in (V_i, V_{i+1}),\;  i = 0, ..., l-1: \,\lvert n_h(v)- n_h(w) \rvert \leq 1$, so  $ n_h(v) \leq \sqrt{N}.$

}{\hfill \hbox{\rlap{$\sqcap$}$\sqcup$}

Let $G$ be the tiling graph in which each vertex corresponds to a tiling of $R$, and two vertices are connected by an edge i.f.f. the corresponding tilings differ by one flip. If the graph has one connected component, there is the unique minimal tiling $T_{min}$ (having the minimal height) and the unique maximal tiling $T_{max}$ (having the maximal height) \cite{B06}. It follows from Proposition \ref{prop: weight for vertices} that the diameter $D_G$ of the tiling graph $G$ satisfies: $D_G \leq N^\frac{3}{2}$, where $N = \vert V_R \vert$.

Define a Markov chain on $\Omega$ as follows:

\subsection*{MC:}

Let $T_0$ be an initial configuration. At each time $t$:
\begin{enumerate}
\item choose an inner vertex of $R$ with probability u.a.r.,
\item choose a direction of a flip with probability $\frac{1}{2}$,
\item perform a flip in the chosen direction in the tiling $T_t$ if possible thus defining the tiling $T_{t+1}$, otherwise stay still.
\end{enumerate}

\begin{lemma}
$MC$ has uniform stationary distribution.
\end{lemma}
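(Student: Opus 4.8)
The plan is to show that the Markov chain is reversible with respect to the uniform distribution, which immediately implies that the uniform distribution is stationary. Since the chain is already known to be ergodic (by the flip-connectivity result of \cite{B06} that guarantees $G$ has a single connected component, together with the aperiodicity coming from the nonzero probability of staying still), it suffices to verify the detailed balance condition $\pi(T)P(T,Q) = \pi(Q)P(Q,T)$ for all tilings $T, Q$, with $\pi$ the uniform measure $\pi(T) = 1/|\Omega|$.

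First I would observe that detailed balance holds trivially when $T = Q$ or when $T$ and $Q$ differ by more than one flip (in the latter case both transition probabilities are zero, since a single step of the chain performs at most one flip). The only substantive case is when $T$ and $Q$ are adjacent in $G$, i.e. they differ by exactly one flip at some inner vertex $v$. The key structural fact, recalled in Section \ref{subsec: height function and flips}, is that a flip has a well-defined direction: it turns a local minimum at $v$ into a local maximum, or vice versa. Concretely, if $Q$ is obtained from $T$ by a flip at $v$, then going from $T$ to $Q$ requires choosing a particular direction (say maximal) at $v$, while going back from $Q$ to $T$ requires choosing the opposite direction (minimal) at the same vertex $v$. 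In each case the chain first selects the vertex $v$ with probability $1/N$ and then selects the required direction with probability $1/2$, so that $P(T,Q) = \tfrac{1}{N}\cdot\tfrac{1}{2} = P(Q,T)$.

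Since $\pi(T) = \pi(Q) = 1/|\Omega|$, the equality $P(T,Q) = P(Q,T)$ gives $\pi(T)P(T,Q) = \pi(Q)P(Q,T)$, so detailed balance holds in every case. By the standard fact that a chain reversible with respect to a distribution has that distribution as a stationary distribution, the uniform distribution is stationary; ergodicity then guarantees it is the unique stationary distribution.

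The main point requiring care is the symmetry $P(T,Q) = P(Q,T)$ for adjacent tilings, and the only place this could fail is if a single flip could be realized in more than one way, or if the two directions were not selected with equal probability. The design of the chain in step (2), assigning probability $\tfrac{1}{2}$ to each of the two flip directions, is precisely what makes the per-vertex transition probabilities symmetric; I expect this to be the crux of the argument, and it is worth stating explicitly that each flip is associated with a unique vertex and a unique direction so that no multiplicity issues arise.
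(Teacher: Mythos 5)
Your proof is correct and follows essentially the same route as the paper: the paper also derives ergodicity from flip-connectivity and aperiodicity from self-loops, and then concludes uniformity from the symmetry of the transition probabilities, which is exactly your detailed-balance computation $P(T,Q)=\tfrac{1}{2N}=P(Q,T)$ spelled out. The only difference is that the paper states the symmetry without proof, while you justify it via the unique vertex and unique flip direction; this is a welcome elaboration, not a different approach.
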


\begin{proof}
$MC$ is \textit{irreducible} since any tiling from $\Omega_R$ can be obtained from any other tiling via a finite number of consecutive flips, it is \textit{aperiodic} since the self-loop probability is greater than zero. Therefore, the chain is \textit{ergodic} and it converges to its unique stationary distribution. Moreover, the transition probabilities are symmetric, so the stationary distribution is uniform.
\end{proof}

\subsection*{Coupling of general Markov chain}
\label{subsec: coupling_general}

We construct a coupling for  $MC$. Let $A_0$ and $B_0$ be two starting configurations. At time $t$ we pick a vertex from $V_R$ and a direction (same for both tilings). We make a flip in $A_t$ and $B_t$ if possible; this defines $A_{t+1}$ and $B_{t+1}$.  This randomizing process respects the order: that is if $h(B_t) < h(A_t)$, then $h(B_{t+1}) \leq h(A_{t+1})$. Simulations using Coupling from the Past for the $MC$ suggest that its mixing time $\tau_{mix} = O(N^{2.5})$.
\begin{figure}[hbtp] 
\centering
\includegraphics[width=0.8\textwidth]{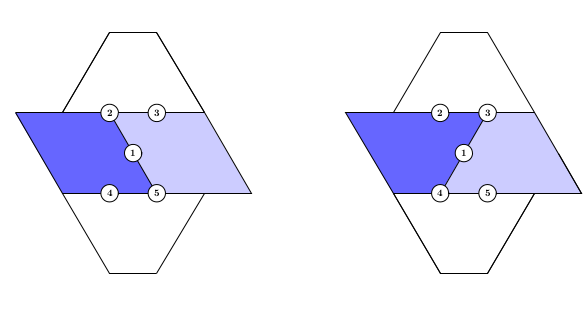}
\caption{Bad configuration for path coupling.}
\label{fig: bad_path_coupling}
\end{figure}

We wish to apply the path coupling method (see Theorem \ref{theorem: DG}) to bound $\tau_{mix}$. For tilings $A, B \in \Omega_R$ define the \textbf{distance} function  $\varphi$ on $\Omega_R \times \Omega_R \rightarrow \mathbb{Z}$ as
$$
\varphi^\ast(A, B): = h(A) - h(B), \, \varphi := \frac{\varphi^\ast }{3}.
$$
Let $U$ be a subset of $\Omega_R \times \Omega_R$  that consists of pairs of configurations that differ by one flip. We wish to prove  that for all $A_t$, $B_t \in U$ 
$$
\mathbb{E}[\Delta{\varphi(A_t, B_t)}] \leq 0,
$$
where $ \Delta{\varphi(A_t, B_t)} = \varphi(A_{t+1}, B_{t+1}) - \varphi(A_t, B_t). $

Unfortunately, with such definition of the distance, the inequality does not stand. The following example illustrates that. Consider two tilings $A$ and $B$ that differ by one flip around one vertex. See Figure \ref{fig: bad_path_coupling} for an illustration, where tilings are different in vertex 1. Then a flip around vertex 1 gives $-\frac{1}{N}$ in terms of expectation, and $+\frac{1}{2N}$ around each one of vertices $2-5$. Putting it all together, one gets:

\begin{equation}
\label{eq: bad_coupling}
 \mathbb{E}[\Delta{\varphi}] = -\frac{1}{N} + \frac{1}{2N} + \frac{1}{2N}+ \frac{1}{2N}+ \frac{1}{2N} > 0.
\end{equation}

This means that either a better metric should be found (for example, what Wilson did with the lozenge tilings \cite{W04}) or the Markov chain should be modified in such a way that would make it possible to analyse it (one can think of the tower of flips for domino and lozenge tilings by Luby, Randall, Sinclair \cite{LRS95}). In the next two subsections we consider a weighted version of the Markov chain such that the coupling method works, and we consider as well restrained Kagome tilings and prove that the corresponding Markov chain is fast mixing for specific regions. 

\section{Weighted Glauber dynamics}
\label{sec: weighted chain}

A popular approach in the statistical physics is to consider weighted models (\textbf{e.g.} dimer model, Ising model etc). It turns out that putting weights on certain configurations can make the analysis of defined Markov chains significantly easier. Let us follow the same approach for Kagome tilings. We change probabilities of flips in the following way. Consider Figure \ref{fig: all_flips}: notice that there are three flips out of six that change the number  of \textit{fish} tiles -- each of these flips makes a fish tile appear or disappear. Let us refer to them  as \textbf{fish changer} flips, and to all others -- \textbf{fish-stable} flips (these flips do not change the number of fish tiles). Let us modify initial $MC$ by putting weights on the fish changer flips.

\subsubsection*{MC$_{\lambda}$ with weight $\lambda >0$:}

Let $T_0$ be an initial configuration. At each time $t$:
\begin{itemize}
\item choose an inner vertex of $R$ u.a.r.;
\item 1. If a fish changer flip can be made in the tiling $T_t$, make it with probability $\frac{1}{1+\lambda}$ if it makes a fish tile disappear and with probability  $\frac{\lambda}{1+\lambda}$ if it makes a fish tile appear. This defines the tiling $T_{t+1}$;\\
2. Otherwise, if possible, perform a fish-stable flip in a random direction in $T_t$, thus defining $T_{t+1}$ (as in the unweighted version); if not, stay still.
\end{itemize}

\textbf{Remark.} If $\lambda$ is small (less than 1), fish changer flips that increase the number of fish tiles are being penalized. The intuition behind this penalization is based on simulations that suggest that its the fish tiles that slow down the chain's mixing time. When $\lambda = 1$, one gets the initial $MC$.

\begin{figure}[hbtp] 
 \hspace*{-5mm}
\includegraphics[width=1\textwidth]{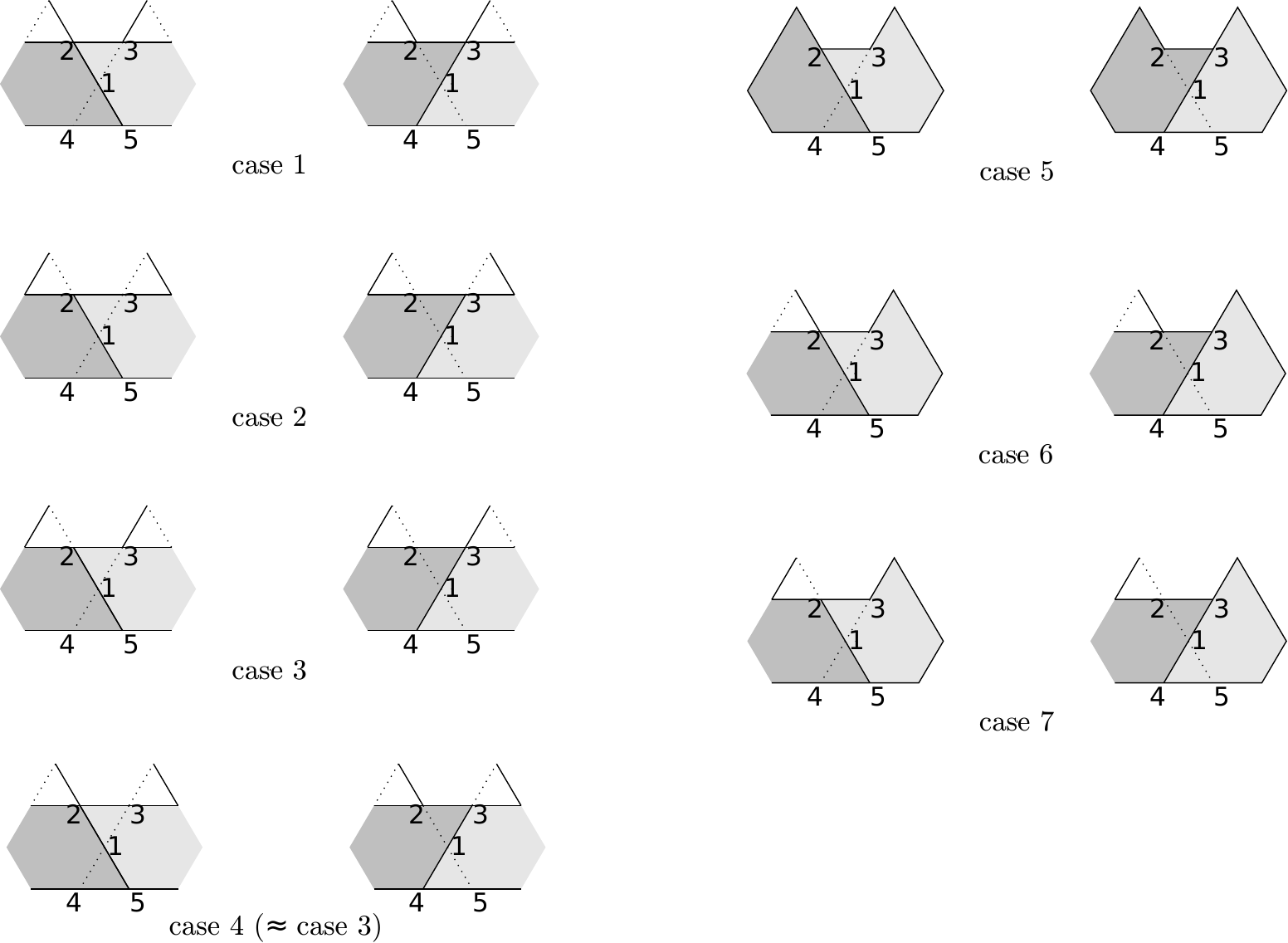}
\caption{All possible cases for a pair of configurations at distance 1 are shown schematically.}
\label{fig: kagome_all_cases}
\end{figure} 

\begin{theorem}\label{thm: weighted kagome mixing}
Consider Kagome tilings of a finite simply connected region $R$ of area $N$ of the Kagome lattice. $MC$ with weight $\lambda$ is rapidly mixing for $\lambda \leq \frac{1}{3}$. The following bound stands for $\tau_{mix}$ and some positive constant $c$:
\begin{equation}\label{eq:weighted bound}
\tau_{mix} \leq c N^4.
\end{equation}
\end{theorem}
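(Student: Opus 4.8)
The plan is to apply the path coupling theorem (Theorem~\ref{theorem: DG}) in its $\beta = 1$ regime, using the integer-valued metric $\varphi$ introduced in Section~\ref{sec: markov chain} and the coupling that selects the same vertex and the same direction in both chains. First I would check the structural hypothesis of Theorem~\ref{theorem: DG}: between any two tilings one must exhibit a flip-path $A = Z_0, Z_1, \dots, Z_r = B$, with consecutive pairs in $U$, along which $\varphi$ is additive. This is a direct consequence of the height-function formalism, since a single flip changes the height at exactly one vertex by $3$, hence changes $\varphi$ by exactly $1$, and the set $\Omega_R$ ordered by $h$ forms a lattice (so a $\varphi$-geodesic can be realized by single flips in $U$). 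It therefore suffices to control $\mathbb{E}[\Delta\varphi(A_t,B_t)]$ on pairs $(A_t,B_t)\in U$ at distance $1$.

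Next I would carry out the local computation for such a pair, which differs by one flip at a single \emph{disagreement} vertex $v$. There are three kinds of contributions to $\mathbb{E}[\Delta\varphi]$. Choosing the vertex $w=v$ (probability $1/N$) always couples the chains: in $A$ the vertex is, say, a local maximum and in $B$ a local minimum, so the minimal direction flips $A$ onto $B$ while $B$ stays put, and the maximal direction flips $B$ onto $A$ while $A$ stays put. Both outcomes make the pair identical, so regardless of how the two directions are reweighted their probabilities sum to one and this vertex contributes exactly $-\tfrac1N$. Choosing $w$ far from $v$ acts identically on both tilings and contributes $0$. The only positive contributions come from the neighbours of $v$ (the vertices $2$--$5$ in Figure~\ref{fig: bad_path_coupling}), where a flip is legal in one tiling but blocked in the other; these are exactly what pushed the unweighted expectation above zero in~\eqref{eq: bad_coupling}.

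The point of the weighting is that each such distance-increasing flip at a neighbour of $v$ is a \emph{fish changer} flip in the fish-creating direction, so under $MC_\lambda$ its probability drops from $\tfrac12$ to $\tfrac{\lambda}{1+\lambda}$. In the worst case of Figure~\ref{fig: bad_path_coupling} the four neighbours then contribute $\tfrac{4\lambda}{(1+\lambda)N}$, and balancing this against the $-\tfrac1N$ coming from $v$ gives $-\tfrac1N + \tfrac{4\lambda}{(1+\lambda)N}\le 0 \iff 4\lambda \le 1+\lambda \iff \lambda \le \tfrac13$, which is the claimed threshold. The main obstacle is precisely this bookkeeping: I would enumerate each of the finitely many local pictures in Figure~\ref{fig: kagome_all_cases} and verify, for every one of them simultaneously, that the distance-increasing neighbour flips are indeed fish-creating (so that small $\lambda$ penalizes rather than encourages them) and that the single threshold $\lambda\le\tfrac13$ renders $\mathbb{E}[\Delta\varphi]\le 0$ in all of them, including the cases where $v$ has several active neighbours or where the flip at $v$ is itself a fish changer.

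Finally, with $\beta = 1$ established, I would supply the constant $\alpha$ required by the second part of Theorem~\ref{theorem: DG}. Selecting the disagreement vertex $v$ (probability $1/N$) always changes $\varphi$, and the effective flip there occurs with probability bounded below by a constant depending only on $\lambda$, so $\mathbb{P}(\varphi(A_{t+1},Y_{t+1})\neq\varphi(A_t,Y_t))\ge\alpha$ with $\alpha\ge c'/N$ for some $c'>0$. Combining this with the diameter bound $D = D_G \le N^{3/2}$ furnished by Proposition~\ref{prop: weight for vertices}, the second case of Theorem~\ref{theorem: DG} yields $\tau_{mix}(\varepsilon)\le \lceil eD^2/\alpha\rceil\lceil\ln\varepsilon^{-1}\rceil = O(N^{3}\cdot N) = O(N^4)$, which is exactly the bound~\eqref{eq:weighted bound}.
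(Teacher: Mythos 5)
Your proposal is correct and follows essentially the same route as the paper: path coupling on pairs differing by one flip, the observation that the disagreement vertex always contributes exactly $-\tfrac{1}{N}$ (whether the flip there is fish-stable or a fish changer), the identification of the at-most-four bad neighbouring vertices whose distance-increasing flips are fish-creating and hence occur with probability $\tfrac{\lambda}{1+\lambda}$, the resulting threshold $4\lambda \le 1+\lambda \iff \lambda \le \tfrac13$, and finally the $\beta=1$ case of Theorem~\ref{theorem: DG} with $\alpha \ge \tfrac{1}{N}$ and $D \le N^{3/2}$ giving $\tau_{mix} = O(N^4)$. The only cosmetic difference is that you defer the exhaustive check of the local configurations of Figure~\ref{fig: kagome_all_cases} to bookkeeping, whereas the paper spells out the remaining cases (one bad vertex per side, and the mixed case), both of which also satisfy the bound when $\lambda \le \tfrac13$.
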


\begin{proof}
Consider a pair of tilings $(A,B)$ different by one flip. One can think of Figure \ref{fig: bad_path_coupling} from the previous subsection. All possible configurations are shown in Figure \ref{fig: kagome_all_cases}. One observation is that if the vertex $1$ is chosen, flips around it will decrease  the distance by $1$ in any of the case: in terms of expectation one will have either  $-\frac{1}{2} - \frac{1}{2}$ if a fish free flip is made or $ -\frac{1}{\lambda + 1 } - \frac{\lambda}{\lambda + 1 } $ if a fish changer flip is made.

Another observation is that there are at most 4 \textbf{bad} vertices  -- such vertices that increase the distance between $A$ and $B$. The only layout possible to have 2 bad vertices on each side (vertices 2,3 and 4,5) is to have a trapeze tile sharing its longest side with the two tiles in question as is shown in Figure \ref{fig: bad_path_coupling}. In this case a flip around any of the vertices 2,3,4 and 5 that can increase the distance is the fish changer flip that makes a fish tile appear. So it is performed with probability $\frac{\lambda}{\lambda + 1 }$. The equation \eqref{eq: bad_coupling} that failed mixing now turns into:

\begin{equation}
\label{eq: weighted_coupling1}
\mathbb{E}[\Delta{\varphi}] = -\frac{1}{N} +  \frac{1}{N} \frac{\lambda}{\lambda + 1 } + \frac{1}{N}\frac{\lambda}{\lambda + 1 }+ \frac{1}{N}\frac{\lambda}{\lambda + 1 }+ \frac{1}{N}\frac{\lambda}{\lambda + 1 }.
\end{equation}

We want \eqref{eq: weighted_coupling1} to be $\leq 0$, so one gets : 
$$
\frac{4 \lambda}{\lambda + 1} \leq 1,
$$
which yields the bound on $\lambda$
$$
\lambda \leq \frac{1}{3}.
$$

We are pretty much done with the proof, since the case with one bad vertex on each side gives, in terms of expectation, at most: $-\frac{1}{N} + \frac{1}{2N}+ \frac{1}{2N} = 0$. And the mixed situation with one bad vertex on one side and two bad vertices on the other side gives, in terms of expectation, at most: $-\frac{1}{N} + \frac{1}{2N}+ \frac{1}{N}\frac{\lambda}{\lambda + 1 }+ \frac{1}{N}\frac{\lambda}{\lambda + 1 }.$ This is surely less than $0$ when $\lambda \leq 1/3 $.

Since  $\mathbb{P} (\varphi(A_{t+1},B_{t+1}) \neq  \varphi(A_t,  B_t) \vert A_t,B_t) \geq \frac{1}{N}$, the coupling theorem gives

$$
\tau_{mix} \leq c \frac{D_G^2}{1/N},
$$
where  $D_G$ is the diameter of the tiling graph $G_{tiling}^R$. So one gets \eqref{eq:weighted bound}.
\end{proof}

\section{Restrained Kagome tilings}
\label{sec: restrained}

We have already seen the previous section that since it seems that the fish tiles are the ones that slow down the mixing, we can just penalize fish tiles in tilings. This allows us to get rapid mixing with weights $\lambda \leq \frac{1}{3}$. A different approach is to get rid of fish tiles completely and consider Kagome tilings where the set of prototiles is restrained to a trapeze and a lozenge.

Only flips that contain these two prototiles are allowed -- see Figure \ref{fig: partial_flips1}. We refer to these flips as \textbf{restrained} flips. A vertex $v$ is \textbf{flippable} if a restrained flip can be perform around it. An inner vertex $v$ is called a \textbf{local flippable maximum/minimum} if $v$ is a local maximum/minimum and a restrained flip can be perform around it that decreases/increases $h(v)$ by $3$. It is called \textbf{non-flippable} otherwise. A flippable local maximum/minimum is always a local maximum/minimum (as defined for general Kagome tilings), while the opposite is not true. A non-flippable local maximum $v$ can never be a global maximum: if $h(v)= M$, then one of its neighbours with height $M-1$ belongs to a triangle where there is a vertex with height $M+1$.

Note that partial order is still respected. It means that for a pair of tilings $T_1, T_2$ of $R$, if $h(T_1) \geq h(T_2)$, then after performing a restrained flip in vertex $v$ in both tilings (if possible), we have that $h(T'_1) \geq h(T'_2)$.

\begin{figure}[hbtp] 
\centering
\includegraphics[width=0.6\textwidth]{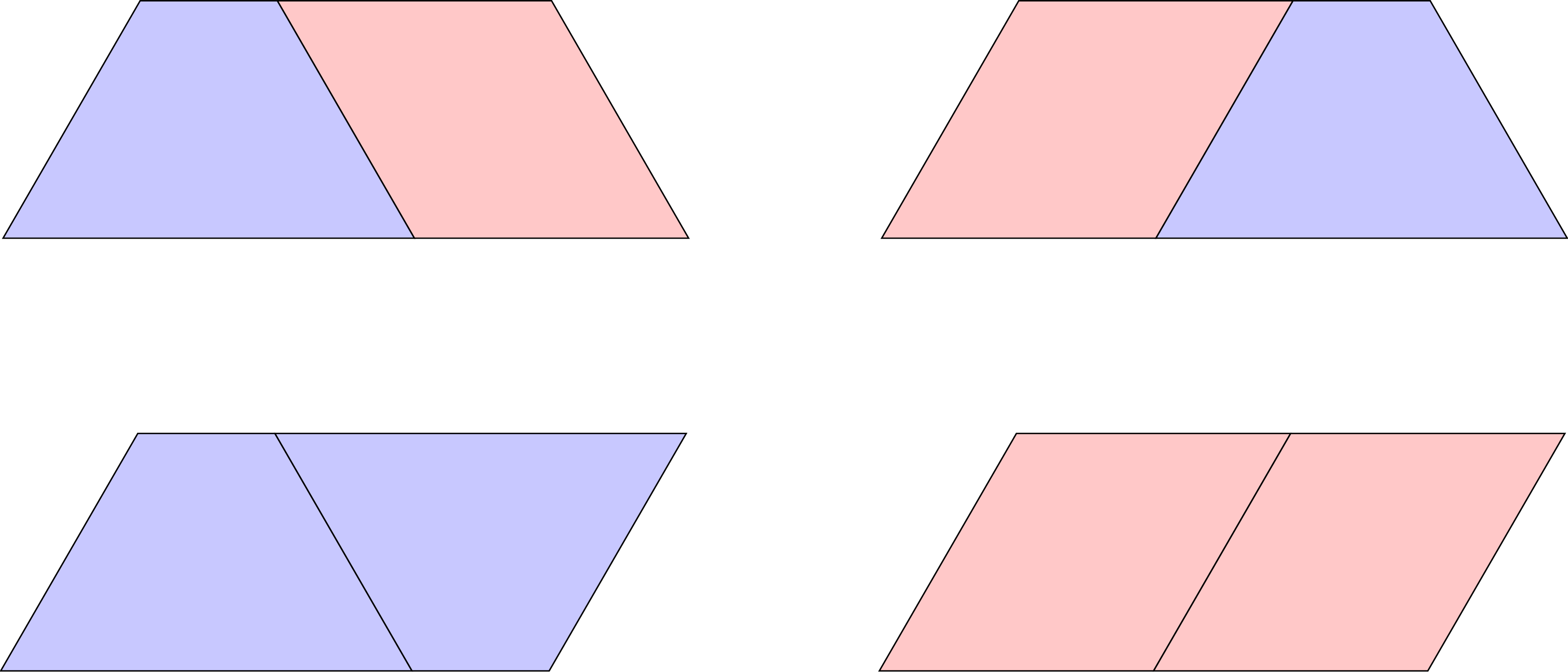}
\caption{Possible flips in restrained Kagome tilings.}
\label{fig: partial_flips1}
\end{figure} 



Define a Markov chain $MC_{restr}$ in the same way as the general $MC$ where only restrained flips are performed. Throughout this part we consider that the region $R$ is a lozenge with sides of length $2n$. We say that it is of size $N$, where $N = 2n$. Denote the set of all such possible restrained Kagome tilings of $R$ by $\Omega$ as before.  A tiling $T \in \Omega$ is called \textbf{minimal} if $T$ does not have local flippable maxima in $R \setminus \partial R$. Let $\Omega_{min}$ be the set of all minimal tilings for $R$, $\Omega_{min} \subset \Omega $.

\begin{theorem}
Let $R$ be a tileable lozenge region of the Kagome lattice and $\Omega$ be the set of all possible restrained Kagome tilings of $R$. Then $\Omega$ is connected.
\end{theorem}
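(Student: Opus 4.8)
The plan is to exploit the height function together with the two facts recorded just before the statement: restrained flips are reversible (a minimal flip is undone by the corresponding maximal flip), and they respect the partial order. Concretely, I would prove the stronger claim that there is a \emph{unique} minimal tiling $T_{min}\in\Omega_{min}$ and that every $T\in\Omega$ is joined to $T_{min}$ by a finite sequence of restrained flips. Connectivity of $\Omega$ then follows at once: given $T_1,T_2\in\Omega$, concatenate a flip path from $T_1$ to $T_{min}$ with the reverse of a flip path from $T_2$ to $T_{min}$.

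For the existence of a path from an arbitrary $T$ to a minimal tiling, I would use the total height $h(T)=\sum_{v\in V_R}h(v)$ as a monotone potential. If $T$ is not minimal, then by definition it has an interior local flippable maximum $v$; performing the minimal restrained flip at $v$ lowers $h(v)$ by $3$ and leaves every other vertex height unchanged, so $h(T)$ strictly decreases. Since $\Omega$ is finite, $h$ takes finitely many values and is bounded below, so iterating minimal flips must terminate, and it can only terminate at a tiling with no interior local flippable maximum, i.e.\ at some minimal tiling.

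The heart of the argument is uniqueness of the minimal tiling, i.e.\ that all tilings in $\Omega_{min}$ coincide. Since a restrained tiling is recovered from its height function, it suffices to show that any two minimal tilings $T_1,T_2$ satisfy $h_1=h_2$. Suppose not, and set $S=\{v: h_1(v)\neq h_2(v)\}$; since the boundary heights are fixed for all tilings of $R$, we have $S\subset R\setminus\partial R$. After relabeling, choose $v\in S$ maximizing $h_1-h_2$, an extremal (outermost) vertex of the disagreement region. I would then analyze the local configuration of $T_1$ around $v$: extremality forces $v$ to be a local maximum of $h_1$ relative to its lattice neighbours, and combining the clockwise/anti-clockwise orientation and the $+1$/$-2$ flow rules with the admissible trapeze/lozenge patterns, this local maximum must in fact be a local \emph{flippable} maximum of $T_1$, contradicting minimality of $T_1$. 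The remark preceding the statement, that a non-flippable local maximum can never be a global maximum, is precisely the ingredient that excludes the degenerate case in which $v$ is merely a non-flippable local maximum. Equivalently, one can package this as the statement that the height functions of $\Omega$ form a lattice under pointwise minimum, whose bottom element is the unique minimal tiling.

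The main obstacle is exactly this last comparison step: converting ``extremal vertex of $h_1-h_2$'' into ``flippable local maximum'' requires a careful, geometry-dependent case check of the prototile patterns around $v$ and of how the flow assignment constrains neighbouring heights in the Kagome lattice. Everything else — reversibility, order preservation, and termination via the monotone potential $h(T)$ — is routine once this local rigidity lemma is established.
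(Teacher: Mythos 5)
Your skeleton (every tiling descends by height-decreasing flips to some minimal tiling; connectivity follows once the minimal tiling is unique) is the same reduction the paper makes, and its first part is indeed routine. The gap is in your treatment of uniqueness of the minimal tiling, which is the entire content of the theorem. Your extremal-vertex argument can be made to show (after the standard refinement of choosing, among the maximizers of $h_1-h_2$, one that also maximizes $h_1$) that $v$ is a local maximum of $h_1$; but the jump from ``local maximum'' to ``local \emph{flippable} maximum'' is exactly what this argument cannot deliver. A minimal restrained tiling is by definition one with no flippable local maxima, and such tilings contain non-flippable local maxima in abundance: the paper's own construction of the minimal tiling of a lozenge repeatedly places tiles that create local maxima and argues only that these must be non-flippable. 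So the ``degenerate case'' you hope to exclude is in fact the generic situation at every peak of a minimal tiling. The remark you invoke --- that a non-flippable local maximum can never be a \emph{global} maximum --- does not apply: your $v$ maximizes the difference $h_1-h_2$, not $h_1$ itself, and nothing forces $v$ to be a global maximum of $h_1$. Your proposal acknowledges this step as ``the main obstacle'' but offers no argument for it, and the obstacle is not a routine case check.

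There is also a structural reason this route cannot be patched cheaply. The set of height functions of \emph{restrained} tilings is not closed under pointwise minimum: the pointwise min of two fish-free height functions is (by Bodini's theory) the height function of some general Kagome tiling, but that tiling may contain fish tiles, so the lattice packaging you mention is simply unavailable here --- and in any case ``lattice under pointwise min'' is strictly stronger than ``unique bottom element,'' not equivalent to it. Consistent with this, your argument never uses the hypothesis that $R$ is a lozenge, whereas the paper stresses that facts analogous to your first three hold for all finite simply connected tileable regions, while uniqueness of the minimal tiling requires an argument specific to the lozenge (an induction that peels a forced width-$2$ contour, reducing size $N$ to $N-4$, with an explicit tile-by-tile exclusion of alternatives), and it explicitly leaves open the characterization of regions for which this works. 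A region-independent proof of your ``local rigidity lemma'' would settle that open question; at minimum, the lozenge geometry must enter your case analysis, and in the proposal it never does.
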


\begin{proof}

The connectivity follows from the following facts:
\begin{enumerate}

\item For $R$ there exists a minimal tiling $T_{m} \in \Omega_{min}$;
\item For each $T \in \Omega$ there exists a unique tiling $T_{min} \in \Omega_{min}$ achievable from $T$ by height-decreasing flips;
\item The height $h$ on $\partial R$ depends only on $R$ and not on the tiling;
\item A tiling in $\Omega_{min}$ is completely determined by the height on $\partial R$.
\end{enumerate}

Let us point out that the first three facts are standard and work for all finite tileable simply connected regions. It is the fourth one which requires a more thorough approach specific to the lozenge region. 

1. Consider a tiling $T_m \in \Omega$. Suppose that for all tilings $T \in \Omega $ $h(T_m) \leq h(T)$. Then let us show that $T_m$ is a minimal tiling. Suppose there exists a vertex $v_m$ which is a flippable local maximum for $T_m$ and it is in $R \setminus \partial R$. Then there exists a tiling $T \in \Omega$, such that $T_m$ will be transformed into $T$ after performing a flip around $v_m$. Moreover, this flip reduces $h_{T_m}(v_m)$ by $3$. Then we have that $h(T) = h(T_m)-3$, which contradicts the hypothesis of $T_m$ having the minimal height. This means that tilings that have minimal height are minimal.

2. Consider a tiling $T_o \in \Omega$. Start performing only height decreasing flips while it is possible. A tiling $T_{min}$ obtained in the end is clearly minimal. Note that the order of flips does not matter since two maxima can not be neighbours and a flip in a vertex cannot block a flip in the same direction in neighbouring vertices (vertices that belong to the same triangle as the chosen vertex). Therefore, for each tiling $T_o$ there exists a unique tiling $T_{min} \in \Omega_{min}$ obtained by the sequence of height decreasing flips as described above. 

3. It is clear that $h$ on $\partial R$ is defined by flow and orientation as for general Kagome tilings. 

4. Let us show that a minimal tiling $T_{min} \in \Omega_{min}$ is completely defined by the height of the boundary of $R$. 
The idea is the following: choose cells of the lozenge $R$ in a specific order and show that at each step the way of placing a tile is unique, which means that $T_{min}$ is completely defined by $h$ on $\partial R$.

We use induction to show that the minimal tiling of a lozenge of size $N$ is uniquely reduced to a lozenge of size $N-4$. 

The base of induction is $N = 1,2,3$. It is not difficult to see that there is a unique way of placing the tiles to get a minimal tiling. Minimal tilings for $N = 1,2,3$ are shown in the Figure \ref{fig: k10}.

\begin{figure}[hbtp] 
\centering
\includegraphics[width=1\textwidth]{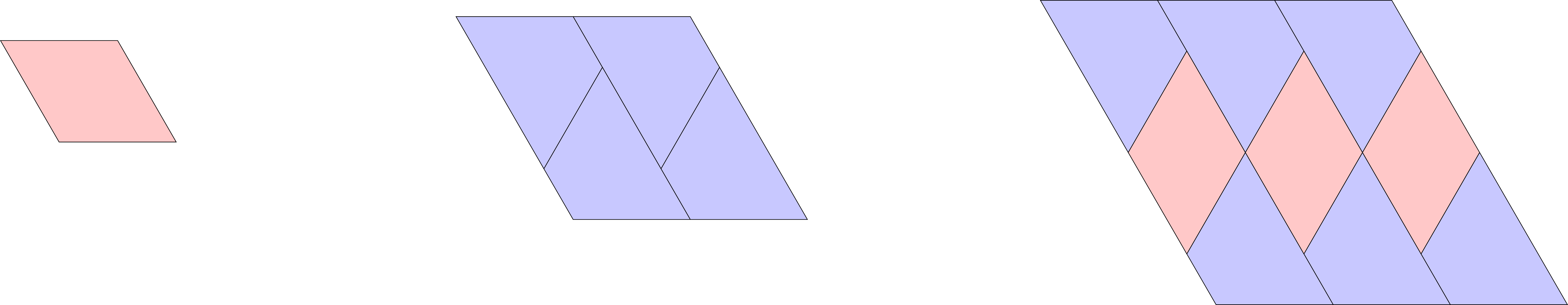}
\caption{Minimal tilings of lozenges of size $N, N = 1,2,3$.}
\label{fig: k10}
\end{figure}

Let us show that there is a unique was of defining a part of $T_{min}$ of size $N$ that brings us to a lozenge of size $N-4$ for which $T_{min}$ is uniquely defined by induction. 

We consider a tile in the upper left corner of the lozenge. There are three possible ways to place a tile, two of which create a local maximum marked by a circle:

\begin{figure}[h!] 
\centering
\includegraphics[width=1\textwidth]{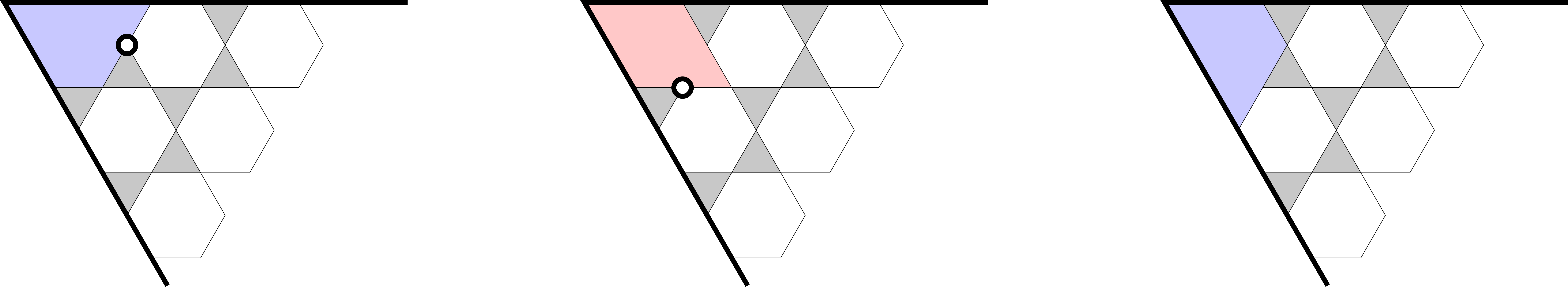}
\end{figure} 

\newpage
The local maximum has to be non-flippable otherwise the tiling is not\\ minimal. The first case forces to place a trapeze on the right which creates a hole below that cannot be completed:

\begin{figure}[h!] 
\centering
\includegraphics[width=0.27\textwidth]{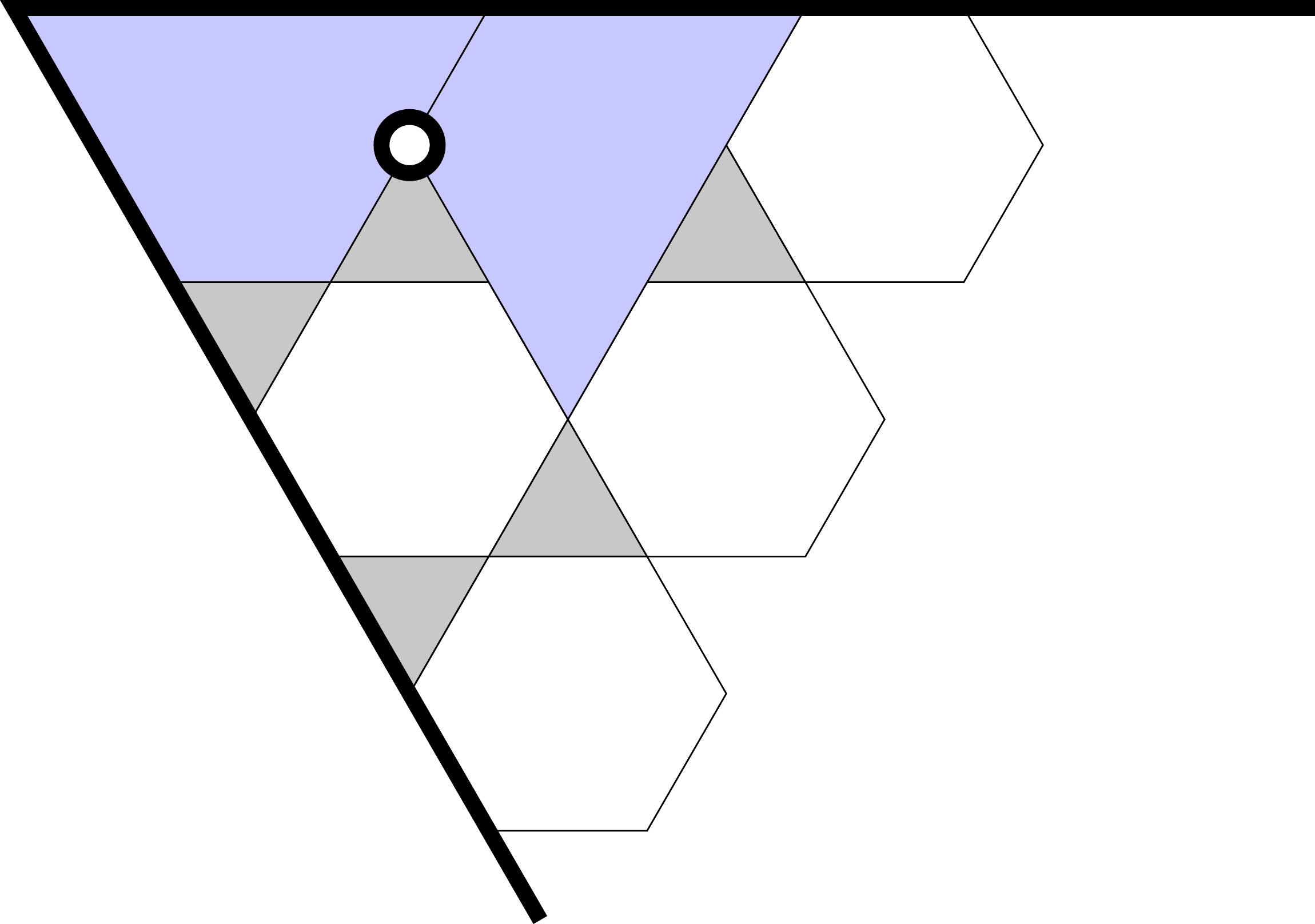}
\end{figure}

The second case forces a trapeze below and a new local maximum. The local maximum has to be non-flippable, so it leaves three possibilities on the right, each creating a third local maximum:

\begin{figure}[h!] 
\centering
\includegraphics[width=1\textwidth]{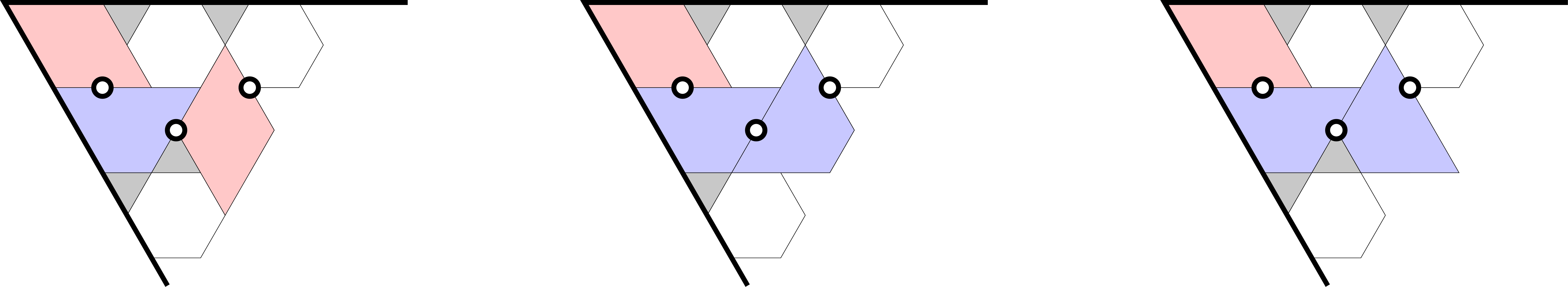}
\end{figure}

Whichever the tile from the picture above is, it occupies the same triangle and hexagon (dashed in the picture below) and create a local maximum. This local maximum has to be non-flippable, which once again allows three possible tiles to be placed, all of them creating a hole that cannot be completed: 

\begin{figure}[hbtp] 
\centering
\includegraphics[width=1\textwidth]{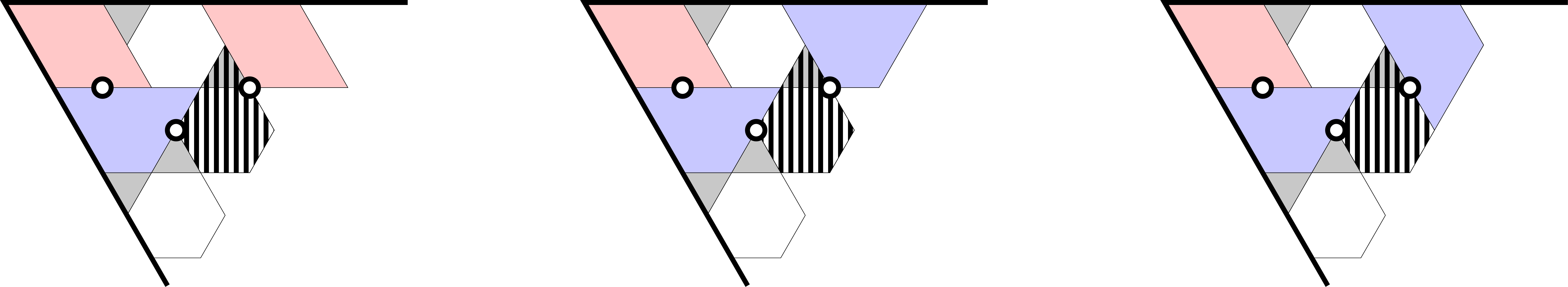}
\end{figure} 

Therefore, the first tile is uniquely defined. We proceed with the first line from left to right. There are three possibilities for the second tile: 

\begin{figure}[h!] 
\centering
\includegraphics[width=1\textwidth]{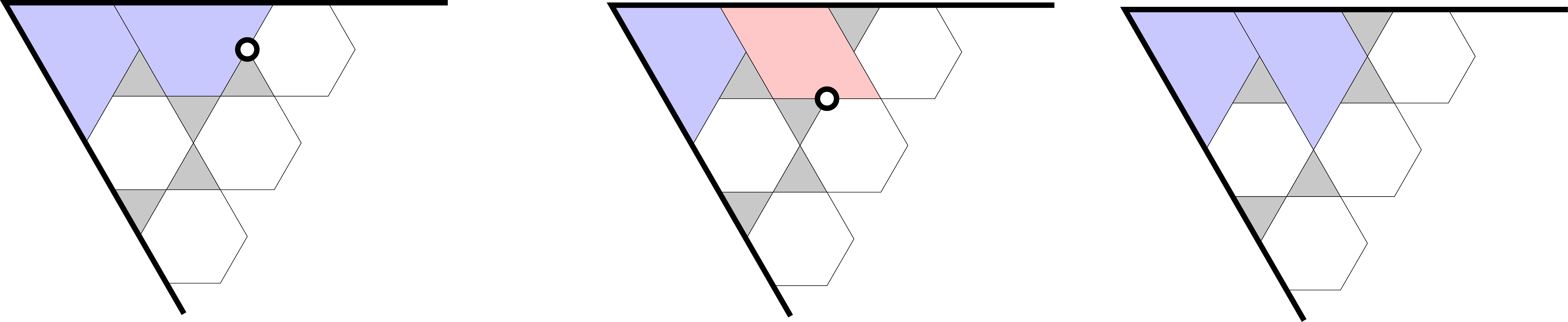}
\end{figure} 

\newpage

In the first case, a trapeze is forced. The marked triangle cannot be covered without placing a fish tile: 

\begin{figure}[h!] 
\centering
\includegraphics[width=0.27\textwidth]{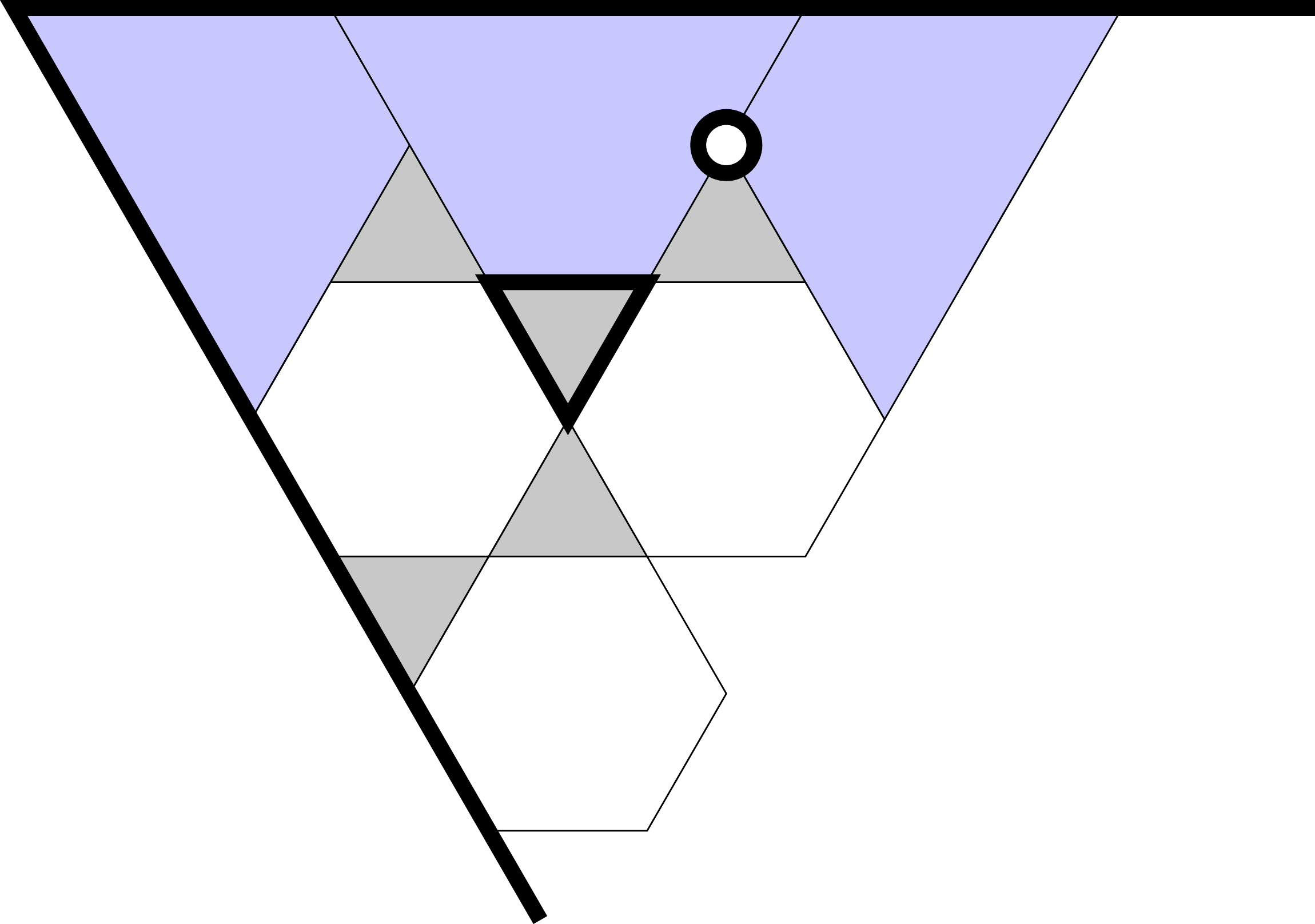}
\end{figure}

In the second case there is a local maximum that cannot be flippable, which brings us to a case above already excluded.

\begin{figure}[h!] 
\centering
\includegraphics[width=0.27\textwidth]{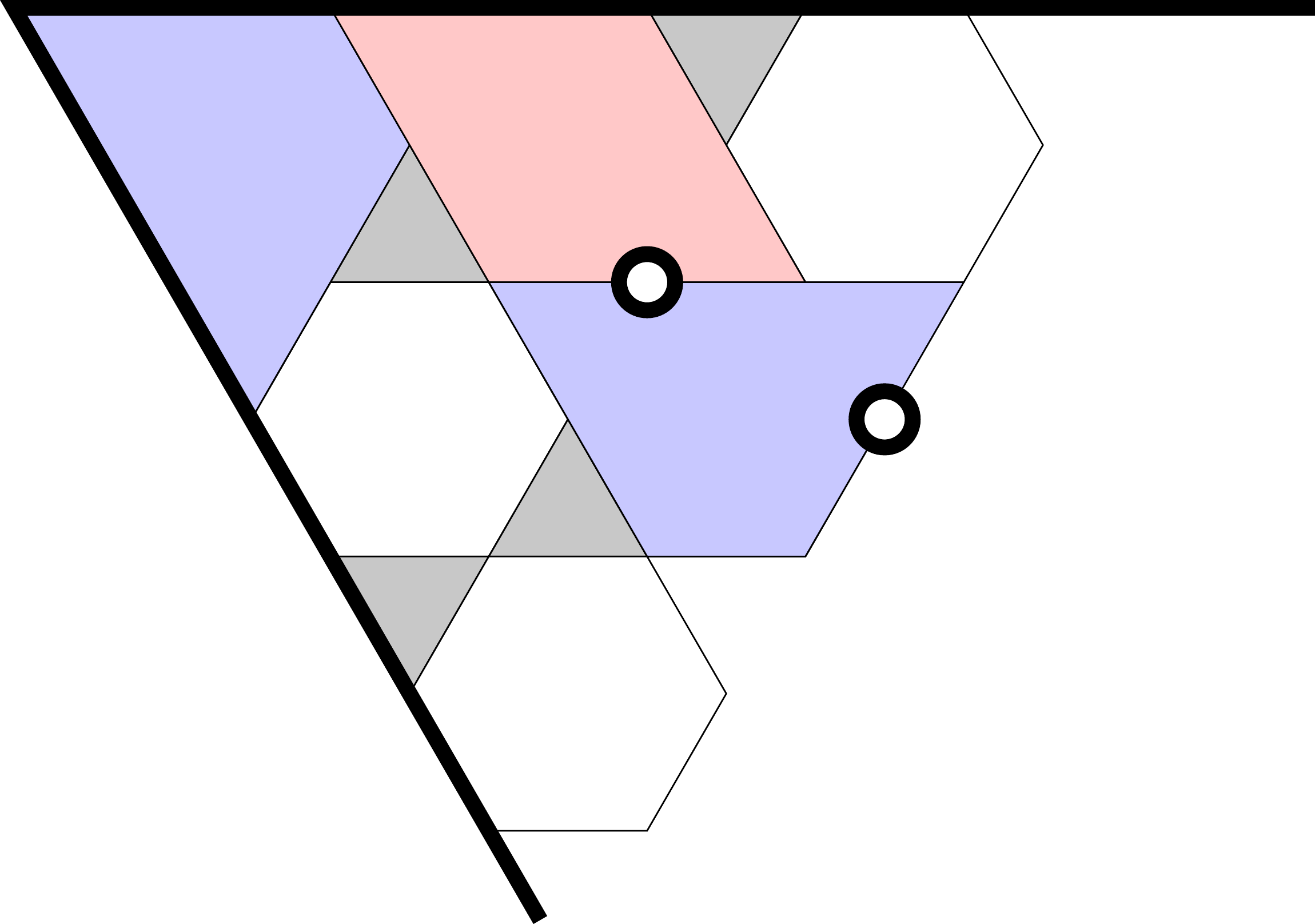}
\end{figure} 

Following the same argument, the first line of the lozenge is uniquely defined:

\begin{figure}[h!]
\centering
\includegraphics[width=0.27\textwidth]{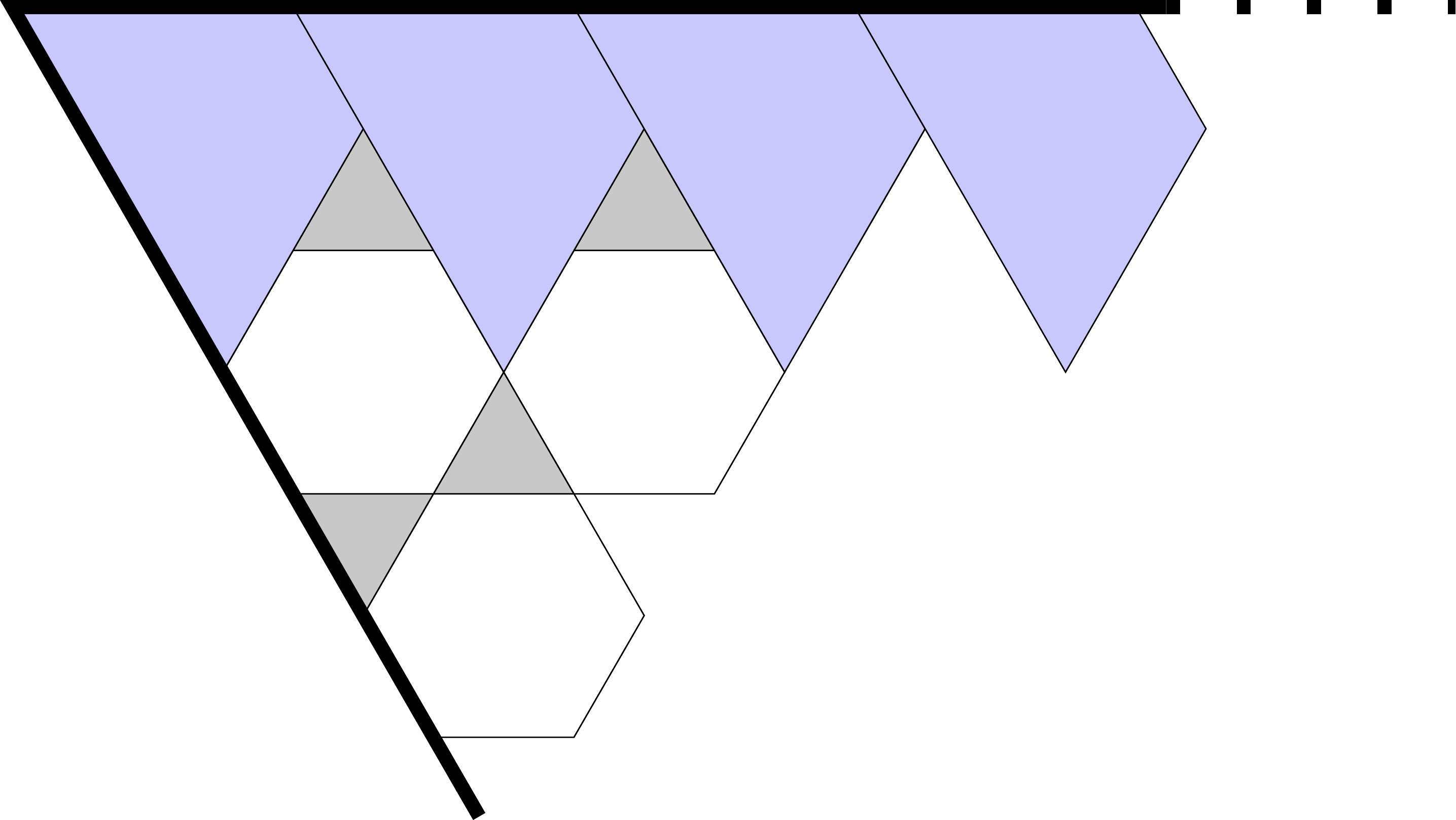}
\end{figure}

Let us now move to the second line following the left-to-right order once again. There are two choices for the first tile in the second line:

\begin{figure}[h!] 
\centering
\includegraphics[width=0.6\textwidth]{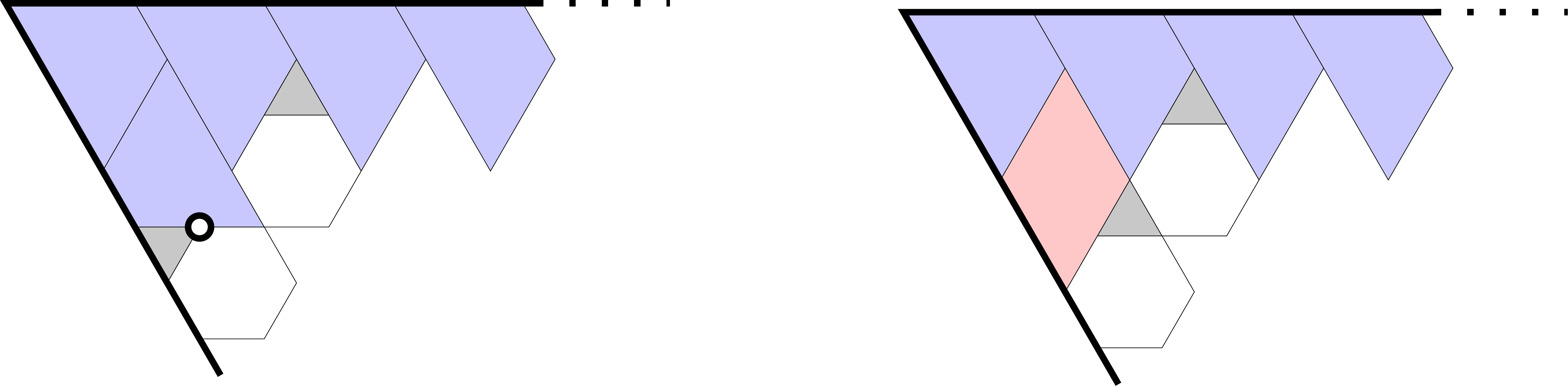}
\end{figure}

The first case creates a local maximum that has to be non-flippable. This forces a trapeze below that makes the hexagon on the right be coupled with the triangle above (dashed in the picture below) and creates a hole that cannot be completed:

\begin{figure}[h!] 
\centering
\includegraphics[width=0.27\textwidth]{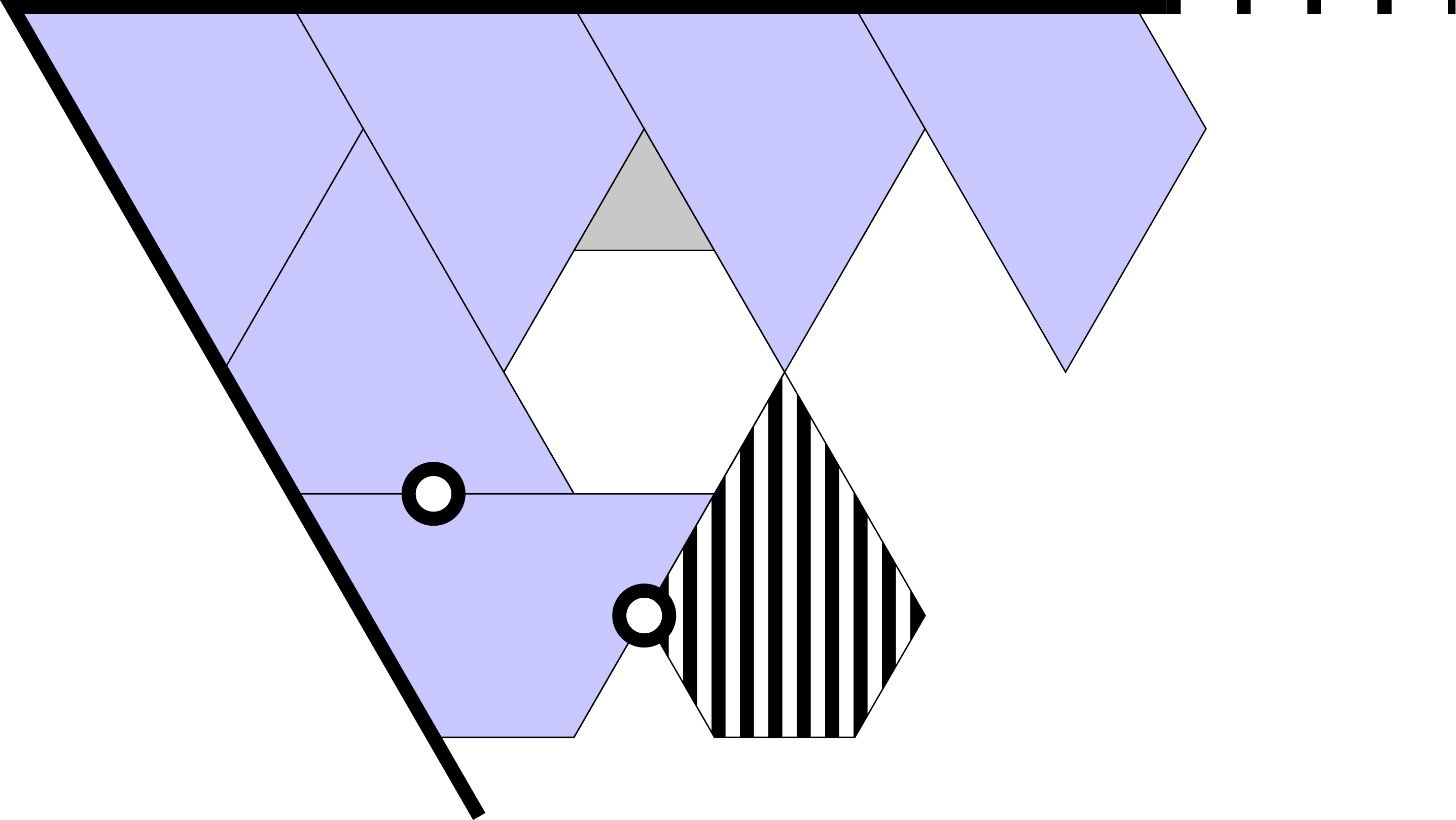}
\end{figure}

\newpage

There are three possibilities for the second hexagon cell of the second line:
\begin{figure}[h!] 
\centering
\includegraphics[width=1\textwidth]{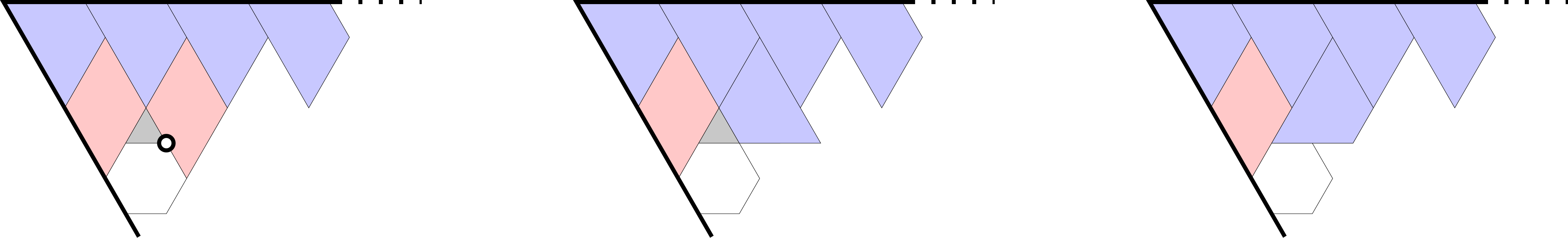}
\end{figure} 

The first two cases are excluded due to the similar reasoning as before (the created local maxima have to be non-flippable which leads to a hole that cannot be completed):

\begin{figure}[h!] 
\centering
\includegraphics[width=0.6\textwidth]{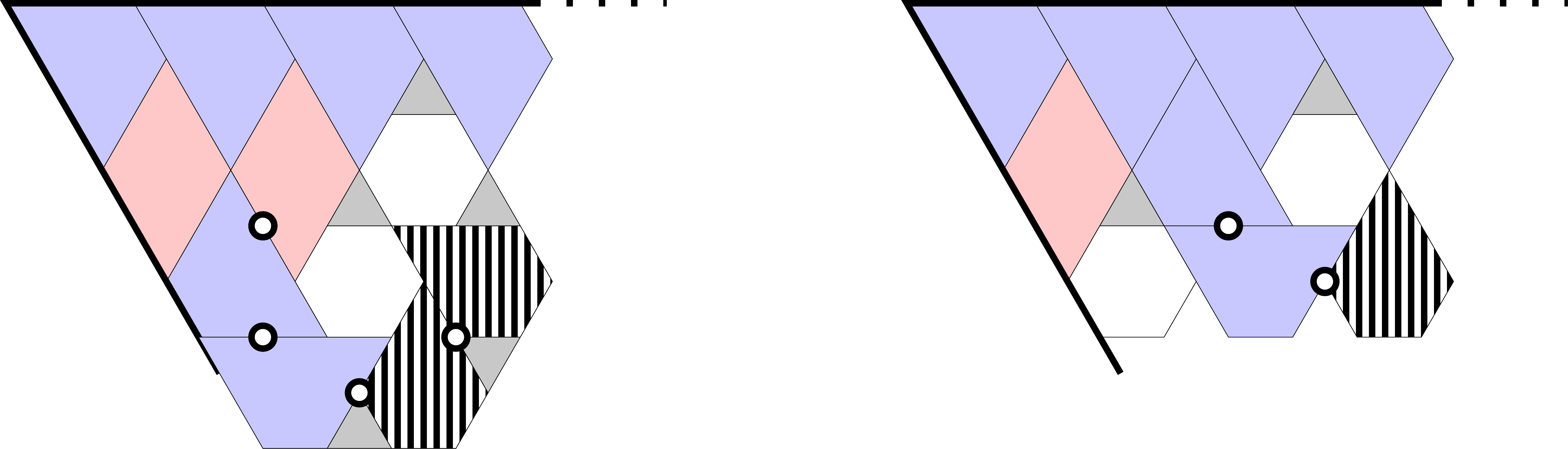}
\end{figure} 

Following the same argument, the second line of the lozenge is uniquely defined:
\begin{figure}[h!] 
\centering
\includegraphics[width=0.57\textwidth]{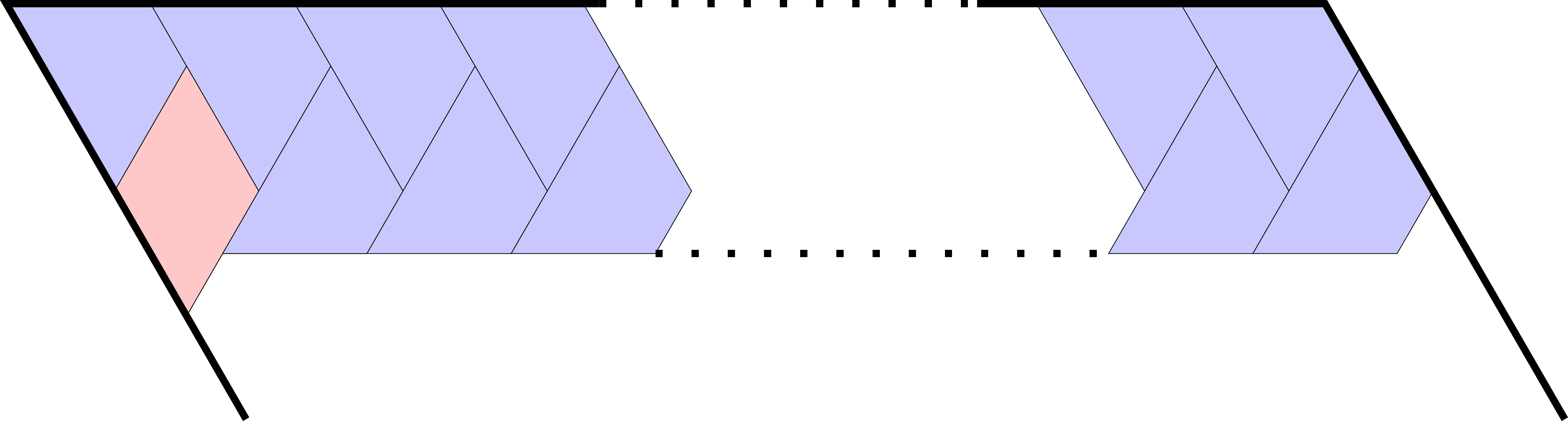}
\end{figure}   

We do the same thing for the two bottom lines of the hexagon in the right-to-left direction. This gives the following:

\begin{figure}[h!] 
\centering
\includegraphics[width=0.7\textwidth]{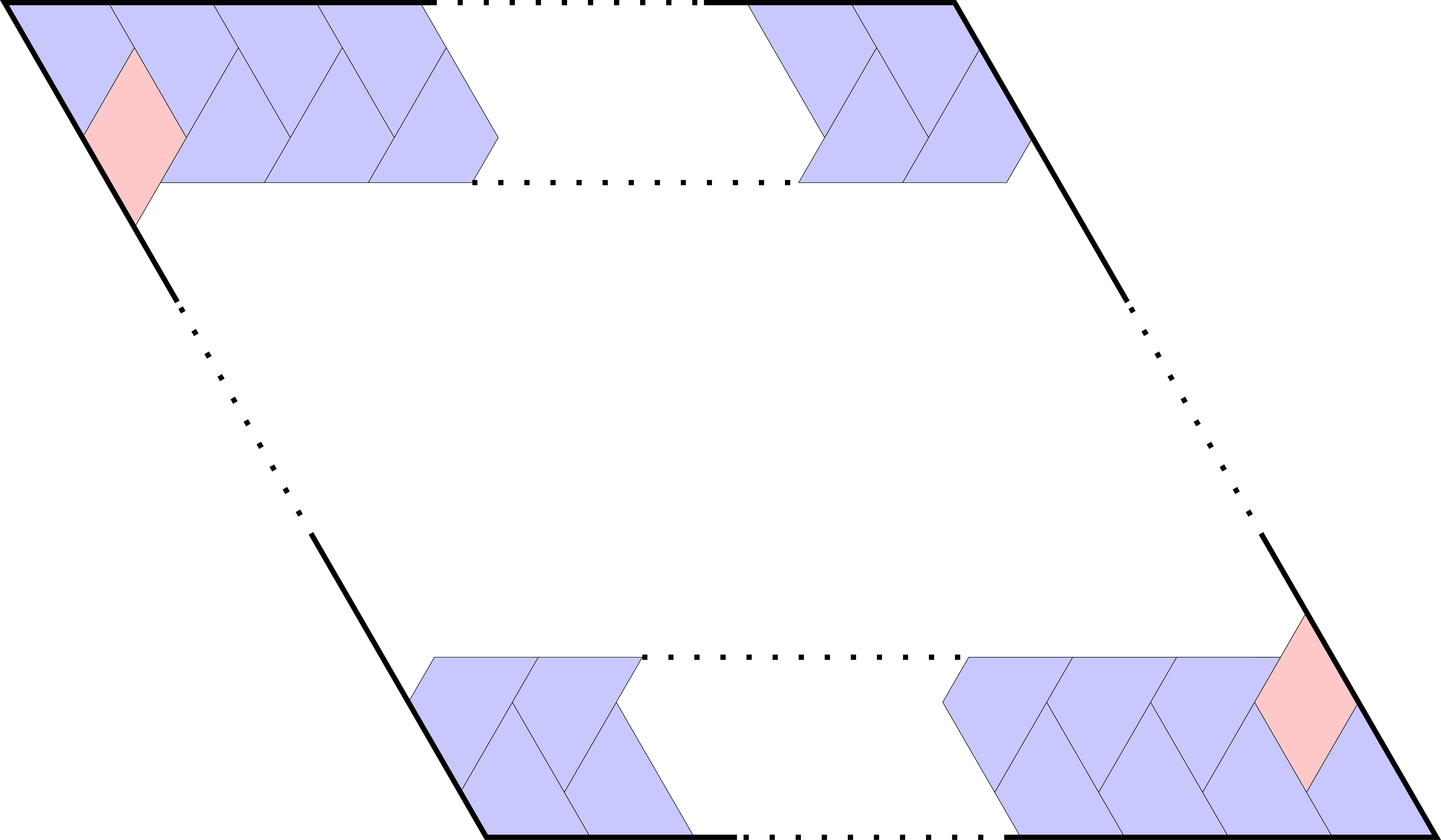}
\end{figure}  
\newpage

Using the same reasoning, tiles are uniquely defined one by one in the up-to-down direction for the leftmost column, and then for the second column:

\begin{figure}[h!] 
\centering
\includegraphics[width=1\textwidth]{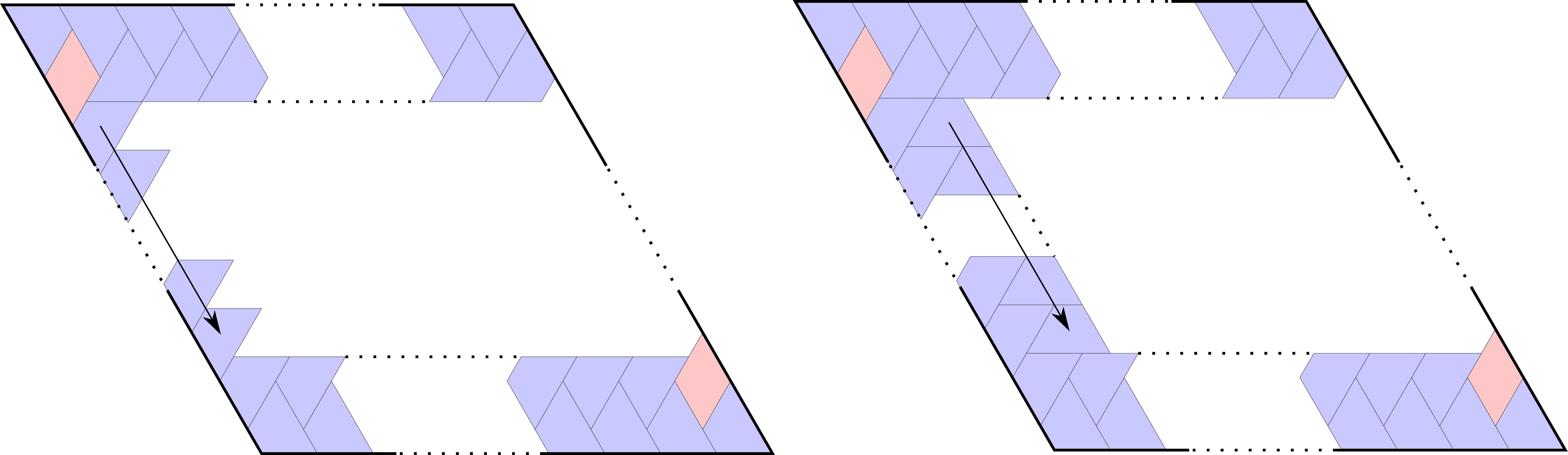}
\end{figure}  

Finally, the same is done for two rightmost columns in the reversed direction. We obtain the contour of width $2$ of the lozenge of size $N$ that is uniquely defined:

\begin{figure}[h!] 
\centering
\includegraphics[width=0.7\textwidth]{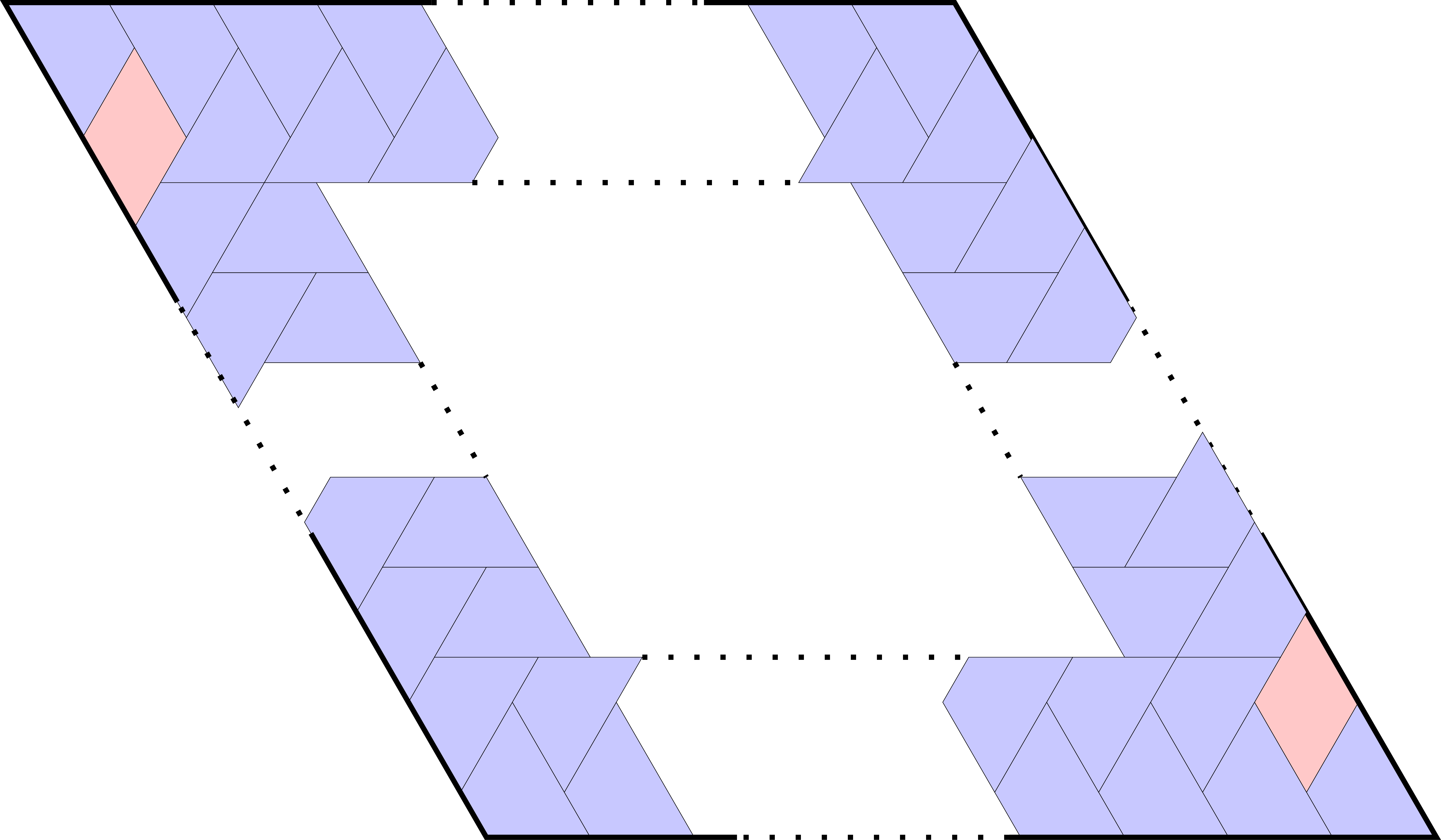}
\end{figure}  
 
The uncovered lozenge in the center has size $N-4$ and the minimal tiling for it is already uniquely defined. At the end one either gets the whole region covered, a lozenge of size $1$, $2$ or $3$.

To conclude, there is a unique minimal tiling that is accessible from every tiling of $R$ by flips, which means that $\Omega$ is connected.

\end{proof}

\textbf{Remark}. The theorem does seem to work for other regions that have shapes similar to a lozenge and flat boundary (such as trapeze, for example). But one has to change the proof of the fourth fact and show an exact way of uniquely defining the minimal tiling given the height on the boundary. It would be nice to characterize all shapes for which the tiling graph of the restrained Kagome tilings is connected and find a general proof of uniqueness of the minimal tiling.

See examples: Figure \ref{fig: restr_final} shows partial and complete minimal tiling for $N = 8$, Figure \ref{fig: k17} shows the minimal tiling for $N = 13$ and Figure \ref{fig: minimal_fish_free_lozenge} -- for $N = 50$.

\begin{figure}[hbtp] 
\centering
\includegraphics[width=1\textwidth]{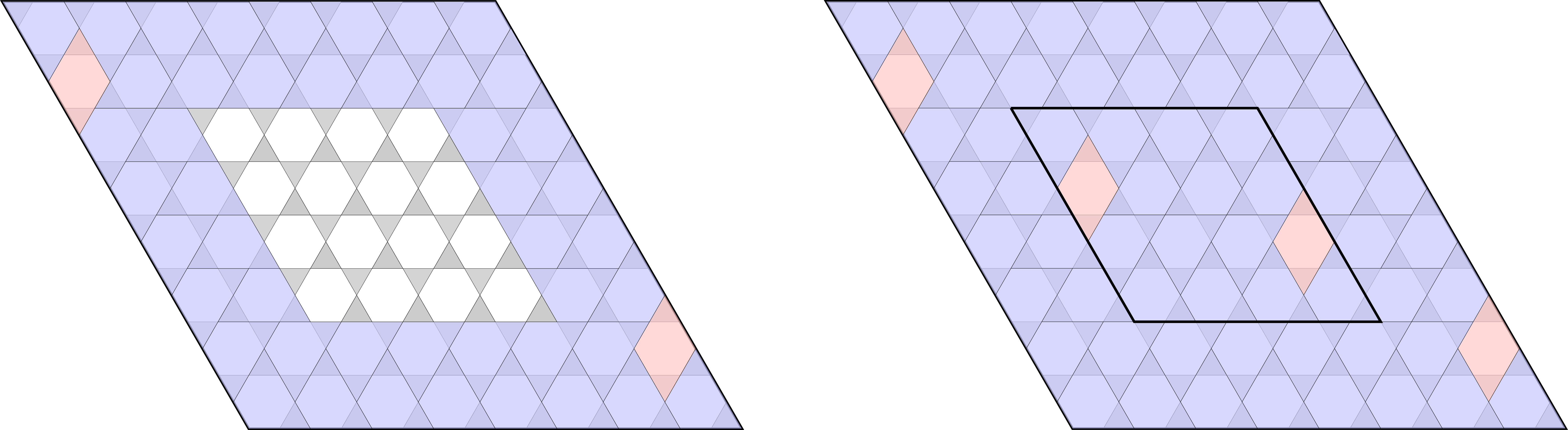}
\caption{Partial and complete minimal restrained Kagome tiling of a lozenge of size $8$.}
\label{fig: restr_final}
\end{figure}

\begin{figure}[hbtp] 
\centering
\includegraphics[width=0.7\textwidth]{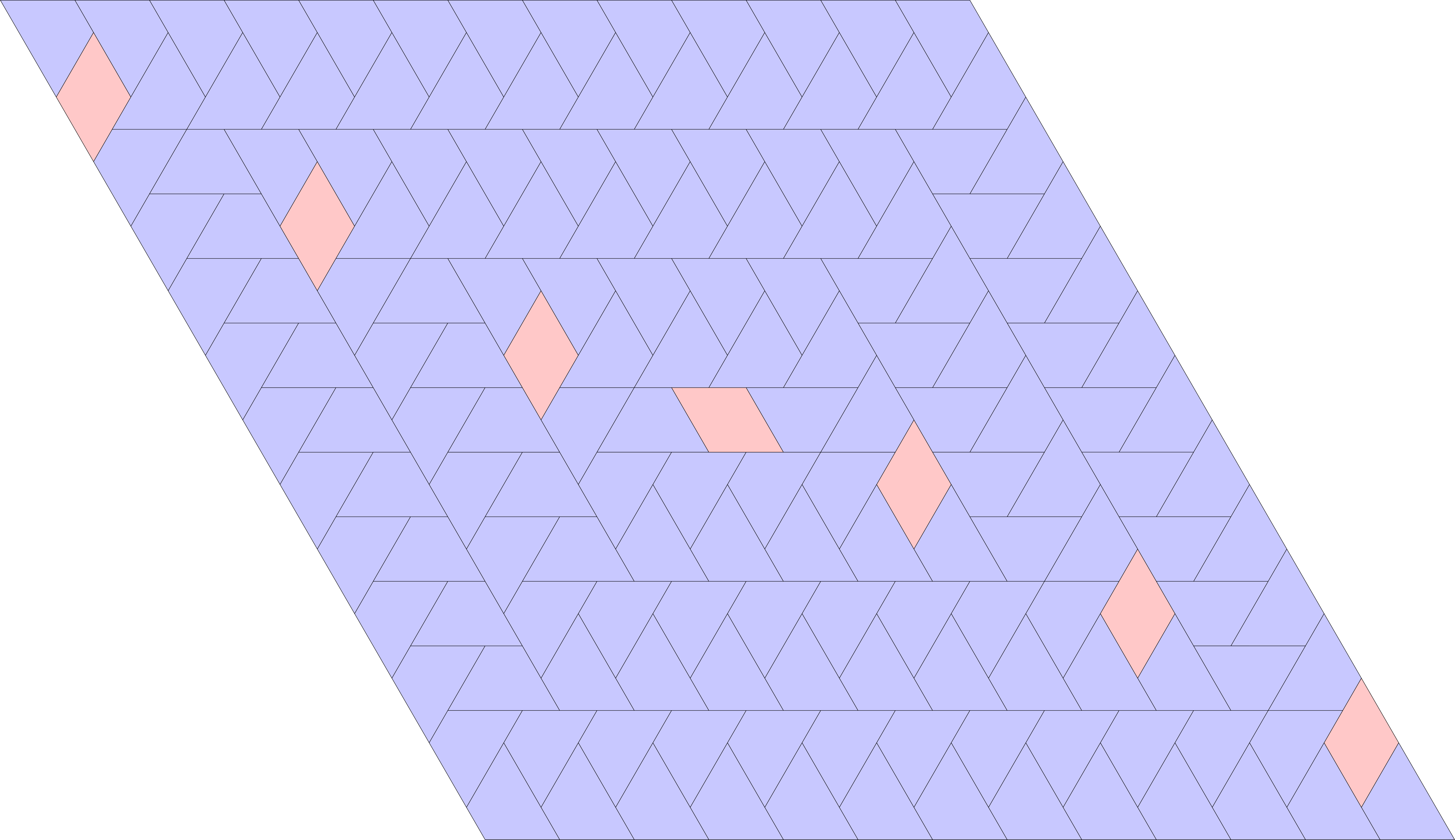}
\caption{Minimal restrained Kagome tiling of a lozenge of size $13$.}
\label{fig: k17}
\end{figure}

\begin{figure}[hbtp] 
\centering
\includegraphics[width=0.9\textwidth]{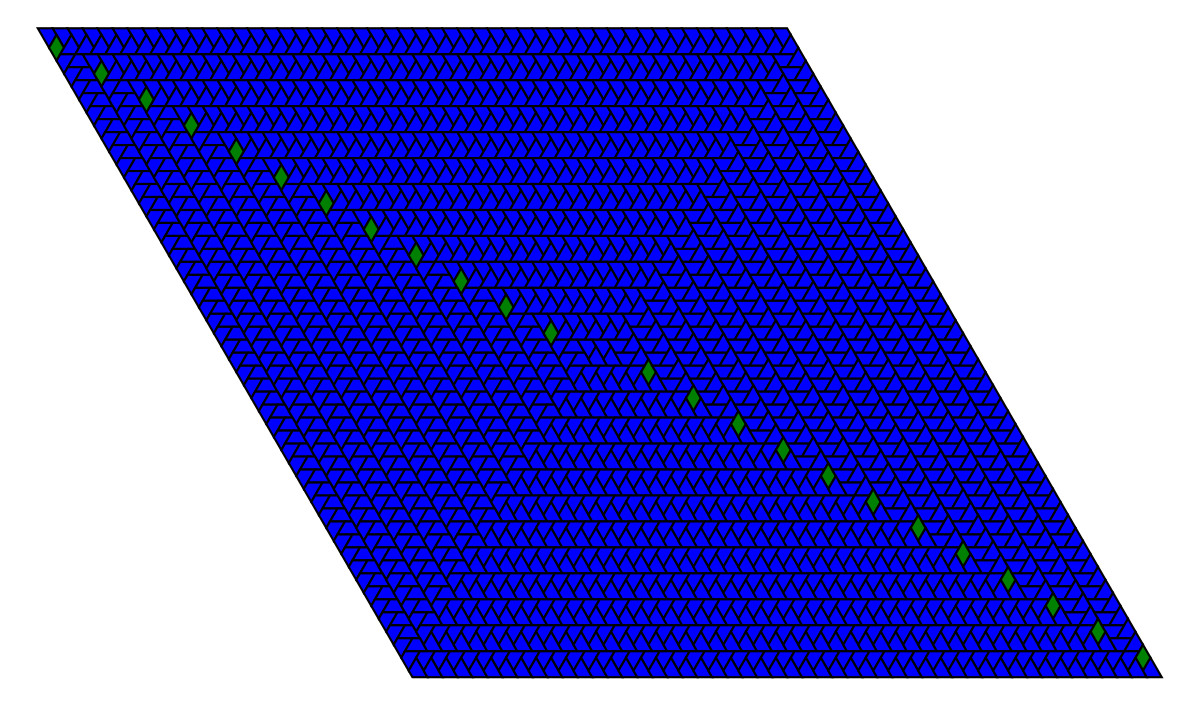}
\caption{Minimal restrained Kagome tiling of a lozenge of size $50$.}
\label{fig: minimal_fish_free_lozenge}
\end{figure}



\newpage
Figure \ref{fig: treillis} shows a tiling graph for general Kagome tilings of a small lozenge region. Restrained Kagome tilings form a connected sub-graph via black edges. Red edges mark flips that include fish tiles. The corresponding tilings complement the restrained tilings to the entire graph of all Kagome tilings.

\begin{figure}[hbtp] 
\centering
\includegraphics[width=0.6\textwidth]{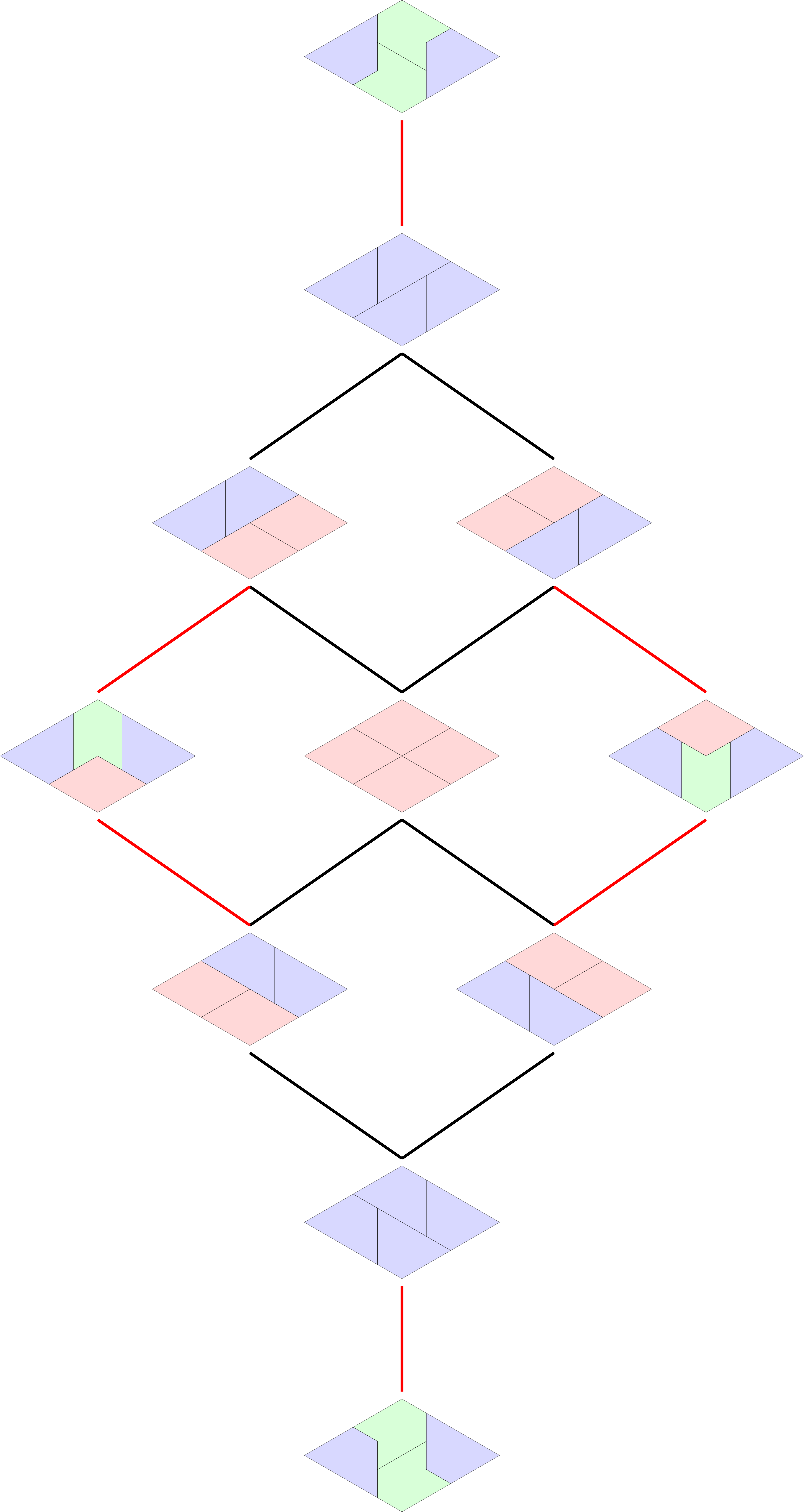}
\caption{Tiling graph for  a small lozenge region: Restrained Kagome tilings form a connected sub-graph}
\label{fig: treillis}
\end{figure}


\newpage
\begin{lemma}
MC$_{restr}$ has uniform stationary distribution.
\end{lemma}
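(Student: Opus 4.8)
The plan is to reproduce, almost verbatim, the ergodicity-plus-symmetry argument already used for the unweighted $MC$, the only new ingredient being that irreducibility is now supplied by the connectivity theorem just proved. First I would establish irreducibility: since $R$ is a lozenge, the preceding theorem guarantees that $\Omega$ is connected under restrained flips, so any restrained tiling can be carried to any other by a finite sequence of restrained flips, each executable with positive probability by MC$_{restr}$. This is the one genuine point of difference from the general chain — there irreducibility was essentially immediate, whereas here it is precisely the content of the connectivity theorem, which is why that theorem had to come first.

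Next I would note aperiodicity: for any tiling there is some inner vertex and direction at which no restrained flip can be performed, so the chain stays put with positive probability and $P(T,T) > 0$. On a finite state space, irreducibility together with aperiodicity yields ergodicity, hence convergence to a unique stationary distribution $\pi$.

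Finally I would show $\pi$ is uniform by checking that $P$ is symmetric. If $A$ and $B$ differ by a single restrained flip around a vertex $v$ with, say, $h(A) > h(B)$, then the move $A \to B$ is selected by choosing $v$ (probability $1/N$) together with the height-decreasing direction (probability $1/2$), while $B \to A$ is selected by choosing the same $v$ with the height-increasing direction; hence $P(A,B) = P(B,A) = 1/(2N)$, and for non-adjacent tilings both probabilities vanish. Symmetry of $P$ gives detailed balance $\pi(A)P(A,B) = \pi(B)P(B,A)$ for $\pi$ constant, so the uniform law is stationary, and by ergodicity it is the unique stationary distribution.

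I do not anticipate any real obstacle: the argument is a direct transcription of the unweighted lemma, with the single substantive input being the connectivity theorem, which hands us irreducibility.
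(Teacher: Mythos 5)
Your proof is correct and follows essentially the same route as the paper's: irreducibility (via the flip-connectivity of $\Omega$ for the lozenge region), aperiodicity from the positive self-loop probability, and symmetry of the transition matrix yielding the uniform stationary distribution. Your version merely makes explicit what the paper leaves implicit — namely that irreducibility is exactly the content of the preceding connectivity theorem, and that symmetry holds because each flip and its reverse are selected with equal probability $1/(2N)$.
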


\begin{proof}
MC$_{restr}$  is \textit{irreducible} since any tiling from $\Omega$ can be obtained from any other tiling via a finite number of consecutive flips, it is \textit{aperiodic} since the self-loop probability is greater than zero. Therefore, the chain is \textit{ergodic} and it converges to its unique stationary distribution. Moreover, the transition probabilities are symmetric, so the stationary distribution is uniform.
\end{proof}

\begin{theorem}
\label{th : fast mixing restr}
Let $R$ be a finite simply connected region of the Kagome region of area $N$ which is tileable by the restrained family of Kagome prototiles. Then MC$_{restr}$ is rapidly mixing.
More precisely, there exists a constant $c > 0$ such that the mixing time $\tau^{R}_{mix }$ of $MC_{restr}$  satisfies 
$$
\tau^{R}_{mix }(\varepsilon) \leq cN^4 \lceil \ln \varepsilon^{-1} \rceil.
$$
\end{theorem}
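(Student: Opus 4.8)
The plan is to run the same path-coupling machinery as in the proof of Theorem~\ref{thm: weighted kagome mixing}, exploiting the fact that removing the fish prototile automatically suppresses precisely the moves that made \eqref{eq: bad_coupling} positive. Concretely, I would couple two copies $A_t,B_t$ of $MC_{restr}$ by having them select the same vertex and the same direction at each step and attempt the corresponding restrained flip. I keep the integer-valued metric $\varphi=\varphi^{\ast}/3=(h(A)-h(B))/3$ and let $U\subseteq\Omega\times\Omega$ be the pairs differing by a single restrained flip, so that $\varphi(A,B)=1$ on $U$. The geodesic hypothesis of Theorem~\ref{theorem: DG} — that any pair is joined by a $\varphi$-additive path of $U$-steps — follows from the connectivity result just established together with facts 1--3 of its proof: one descends from either tiling by height-decreasing restrained flips to the common minimal tiling, and each such flip is a $U$-step changing $\varphi$ by exactly $1$. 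The range of $\varphi$ is controlled by the diameter bound $D=D_G\le N^{3/2}$ coming from Proposition~\ref{prop: weight for vertices}.

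The heart of the argument is the one-step drift estimate $\mathbb{E}[\Delta\varphi(A_t,B_t)]\le 0$ for every $(A_t,B_t)\in U$, and this is exactly where restraining the tile set pays off. Writing $A,B$ for two tilings differing by a flip at a single vertex labelled $1$, choosing vertex $1$ (probability $1/N$) coalesces the two copies and contributes $-\frac{1}{N}$, just as in the unweighted computation \eqref{eq: bad_coupling}. The only vertices that could increase $\varphi$ are the neighbours of vertex $1$ lying in the same tiles, and the decisive point is that in the general chain every such distance-increasing move is a \emph{fish changer} flip that makes a fish tile appear — these are the moves that produced the offending $+\frac{1}{2N}$ terms in \eqref{eq: bad_coupling} and that the weight $\lambda$ was introduced to damp in \eqref{eq: weighted_coupling1}. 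Since $MC_{restr}$ never creates a fish tile, I would revisit the local configurations of Figure~\ref{fig: kagome_all_cases}, retain only the restrained flips, and verify case by case that each neighbouring vertex now contributes a non-positive amount. Combining the $-\frac{1}{N}$ from vertex $1$ with these non-positive contributions gives $\mathbb{E}[\Delta\varphi]\le 0$, so that $\beta=1$ is admissible in Theorem~\ref{theorem: DG} (part~2 requires only that the drift be non-increasing in expectation).

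With $\beta=1$ in hand I would invoke part~2 of Theorem~\ref{theorem: DG}. Because selecting vertex $1$ already forces $\varphi$ to change, we have $\mathbb{P}(\varphi(A_{t+1},B_{t+1})\neq\varphi(A_t,B_t))\ge \frac{1}{N}$ whenever $A_t\neq B_t$, so one may take $\alpha=\frac{1}{N}$. Substituting $D\le N^{3/2}$ and $\alpha=\frac1N$ yields
$$
\tau^{R}_{mix}(\varepsilon)\ \le\ \left\lceil\frac{eD^{2}}{\alpha}\right\rceil\,\lceil\ln\varepsilon^{-1}\rceil\ \le\ \lceil eN^{4}\rceil\,\lceil\ln\varepsilon^{-1}\rceil,
$$
which is the claimed bound for a suitable constant $c$.

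The step I expect to be the main obstacle is the case-by-case verification underlying the drift estimate: one must check that, once fish tiles are forbidden, no neighbouring restrained flip can strictly increase $\varphi$ — equivalently, that every distance-increasing flip appearing in the general analysis is genuinely a fish changer. This means enumerating the admissible local pictures around vertex $1$ (the restrained analogue of Figure~\ref{fig: kagome_all_cases}) and signing each contribution, which is where the geometry of the two remaining prototiles has to be used carefully. A secondary subtlety is that the path-additivity hypothesis of path coupling presupposes connectivity of the restrained tiling graph, which has so far been established only for lozenge regions; for a general simply connected $R$ one must either argue within a connected component or extend the connectivity proof, as flagged in the Remark following the connectivity theorem.
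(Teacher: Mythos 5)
Your overall skeleton is exactly the paper's: path coupling with the metric $\varphi=\varphi^\ast/3$ on pairs differing by one restrained flip, drift $\mathbb{E}[\Delta\varphi]\le 0$, then part 2 of Theorem~\ref{theorem: DG} with $\alpha=\frac{1}{N}$ (vertex $1$ always changes $\varphi$) and $D\le N^{3/2}$ from Proposition~\ref{prop: weight for vertices}, giving $cN^4\lceil\ln\varepsilon^{-1}\rceil$. However, the ``decisive point'' on which you hang the drift estimate is misstated, and the case check you defer would fail if carried out as you describe it. It is not true that every distance-increasing move in the general analysis is a fish-changer flip, and correspondingly it is not true that in $MC_{restr}$ every neighbouring vertex contributes a non-positive amount: the paper's Figure~\ref{fig: good_path_coupling} exhibits the worst restrained case, in which vertices $2$ and $4$ are still bad and each contributes $+\frac{1}{2N}$ (these are ordinary restrained flips chosen with probability $\frac12$, not fish-changers). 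What forbidding fish tiles actually buys is narrower but sufficient: the \emph{only} local layout with two bad vertices on each side --- the trapeze configuration of Figure~\ref{fig: bad_path_coupling}, whose four distance-increasing moves are all fish-creating flips --- disappears, since in the restrained chain no flip at all can be performed at vertices $2$--$5$ there, giving drift $-\frac{1}{N}$ in that case. All remaining cases have at most one bad vertex per side, so the drift is at most $-\frac{1}{N}+\frac{1}{2N}+\frac{1}{2N}=0$, which is $\beta=1$ and is all that part 2 of Dyer--Greenhill requires. So your conclusion survives, but only after replacing ``each neighbour contributes non-positively'' by ``the total over neighbours is at most $+\frac{1}{N}$, achieved with at most one bad vertex per side.''

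Your secondary caveat is well taken and is in fact a gap shared with the paper itself: the theorem is stated for a general finite simply connected tileable region, while connectivity of the restrained tiling graph (hence irreducibility of $MC_{restr}$ and the path-additivity hypothesis of path coupling) is only proved for lozenge regions; the paper's proof silently uses it, exactly as yours would.
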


\begin{proof}
Consider restrained Kagome tilings and $MC_{restr}$. We prove that the chain is rapidly mixing by the use of the path coupling theorem that did not seem to work for the general case (see the end of Section \ref{sec: markov chain}).

Remember that the reason why it did not work for the general case, was that there were 4 vertices that increased $\Delta\varphi$ (as shown in  \eqref{eq: bad_coupling} and Figure \ref{fig: bad_path_coupling}). Now in this case of Figure  \ref{fig: bad_path_coupling}, for a pair of tilings $A$ and $B$ that differ by one flip around the vertice 1, no flips can be made around vertices $2,3,4,5$ which gives

\begin{equation}
\label{eq: good_coupling}
\mathbb{E}[\Delta{\varphi}] = -\frac{1}{N}.
\end{equation}

The other cases give at most one bad vertex on each side as shown in Figure \ref{fig: good_path_coupling}. This gives $ +\frac{1}{2N}$ in vertices $2$ and $4$, thus in total

\begin{equation}
\label{eq: good_coupling1}
\mathbb{E}[\Delta{\varphi}] \leq -\frac{1}{N} +\frac{1}{2N}+\frac{1}{2N} = 0.
\end{equation}
 
\begin{figure}[hbtp] 
\centering
\includegraphics[width=0.8\textwidth]{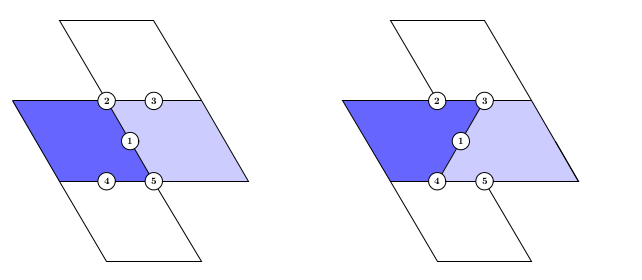}
\caption{The worst case for the path coupling for $MC_{restr}$.}
\label{fig: good_path_coupling}
\end{figure}

If $A_t \neq B_t$ for all $t$, then 
$$
\mathbb{P}[\varphi(A_{t+1}, B_{t+1}) \neq \varphi(A_{t}, B_{t})\vert A_t, B_t ] \geq \frac{1}{N},
$$
because in any case choosing the vertex $1$ will decrease the distance. The coupling lemma  gives the following bound: 
\begin{equation}
\label{eq: path_coup}
\tau^{R}_{mix}(\varepsilon) \leq \frac{D^2}{\frac{1}{N}} \lceil \ln \varepsilon^{-1} \rceil.
\end{equation}

Since for the restrained tilings, the diameter $D$ is also  $O(N^{\frac{3}{2}})$ (in the same way as for the general Kagome tiling using Proposition \ref{prop: weight for vertices}) and plugging it in \eqref{eq: path_coup} yields the desired bound:
$$
\tau^{R}_{mix}(\varepsilon) \leq cN^4 \lceil \ln \varepsilon^{-1} \rceil,
$$
where $c $ is a positive constant.

\end{proof}


\section{Limit shape}
\label{sec: limit shape}
Consider a lozenge region with a non flat boundary in such a way that the height function of the boundary is $\Omega(n)$, where $n$ is the size of the region (See Figure \ref{fig: contour}). Then there appears to be a phenomenon similar to the Arctic circle in case of dominoes and lozenges (see \cite{CLP02,JPS98}). Figures \ref{fig: actic 4}, \ref{fig: actic 5} show simulations done using the Coupling from the Past algorithm. It would be interesting to characterize the limiting shape rigorously.\\

\begin{figure}[hbtp] 
\centering
\includegraphics[width=0.4\textwidth]{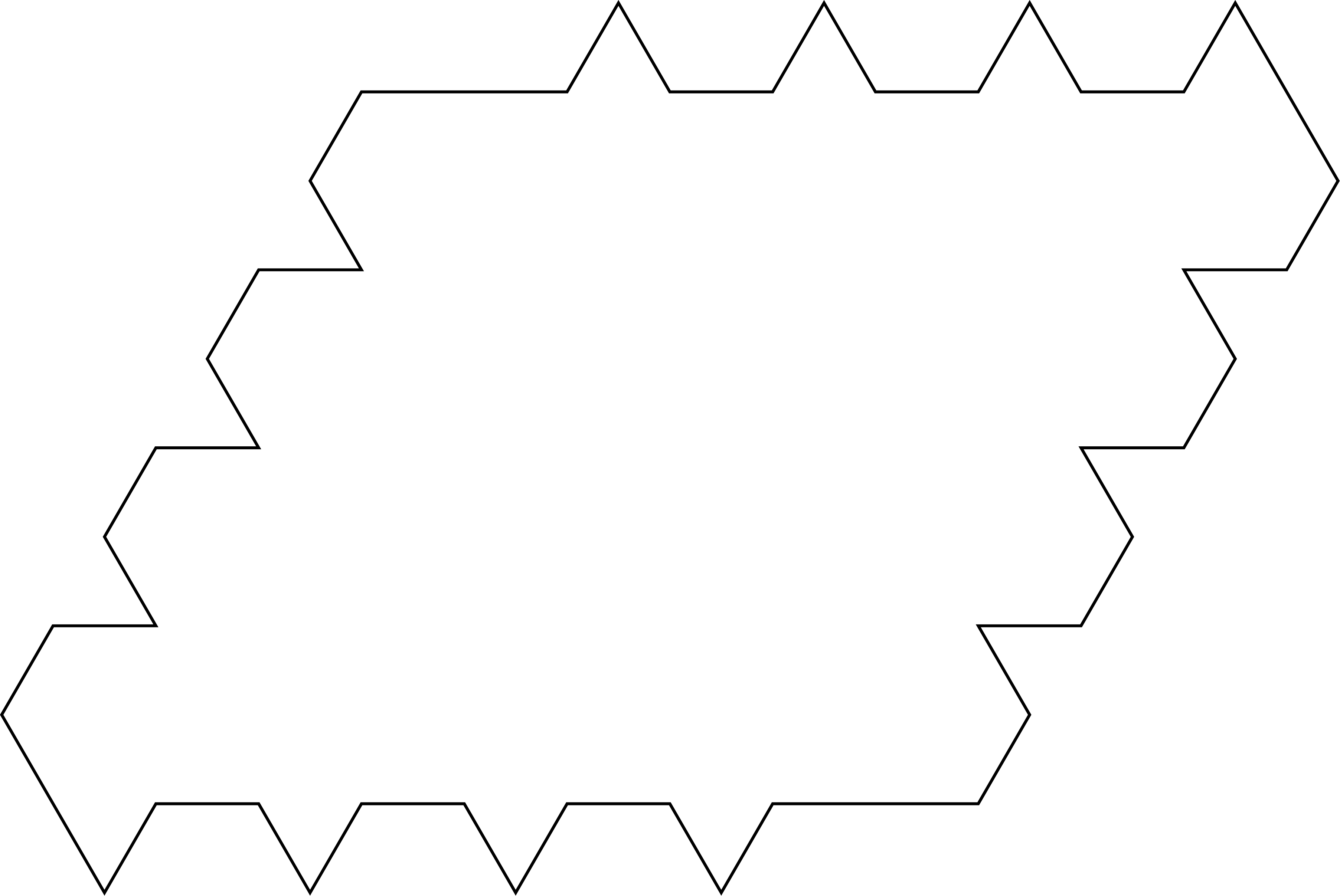}
\caption{Kagome lozenge-shaped region with a non flat boundary.}
\label{fig: contour}
\end{figure} 

\begin{figure}[hbtp] 
\includegraphics[width=1\textwidth]{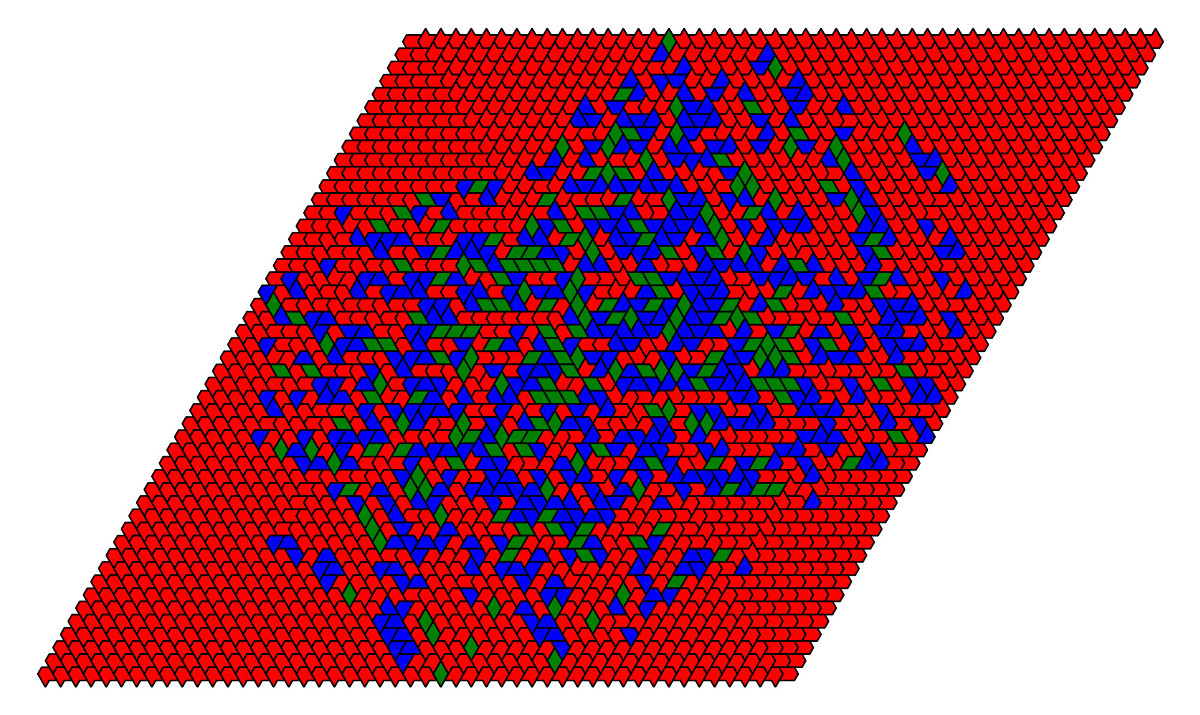}
\caption{Kagome tiling of a lozenge of size 50.}
\label{fig: actic 4}
\end{figure} 

\begin{figure}[hbtp] 
\includegraphics[width=1\textwidth]{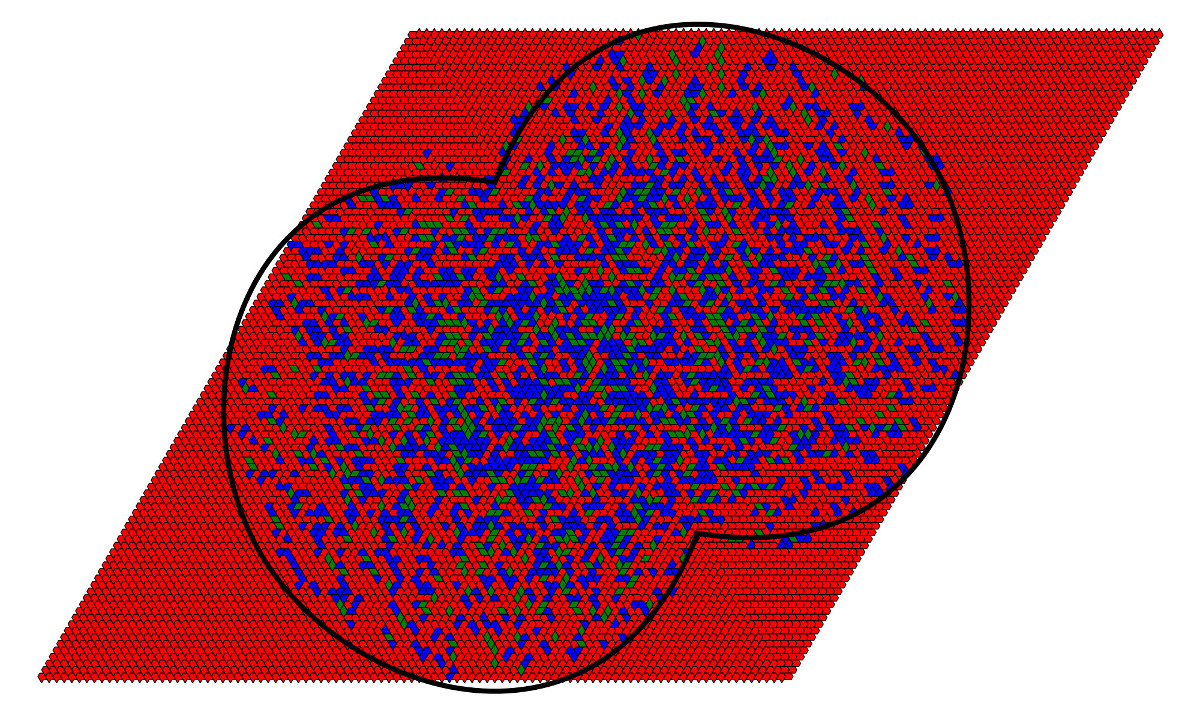}
\caption{Kagome tiling of a lozenge of size 100 with a hand-drawn ``frozen'' boundary.}
\label{fig: actic 5}
\end{figure}

\section*{Acknowledgements}

The author would to thank Thomas Fernique for valuable remarks, help with pictures and for reading the final draft of the paper, Olivier Bodini and Eric Rémila for their helpful comments.


\end{document}